 \newcommand{\mydriver}{pdflatex} 
\newtheorem{lemma}{Lemma}
\newtheorem{theorem}{Theorem}
\newtheorem{proposition}{Proposition}
\theoremstyle{definition}
\newtheorem{definition}{Definition}
\theoremstyle{remark}
\newtheorem{remark}{Remark}
\theoremstyle{definition}
\newtheorem{example}{Example}
\g@addto@macro\appendix{%
  \addtocontents{toc}{%
    \protect%
  }%
} \setlength{\cftchapnumwidth}{6em} 
\renewcommand{\baselinestretch}{2}
\begin{document}


\pagestyle{empty}

\thispagestyle{empty}
\hbox{\ }
\vspace{1in}
\renewcommand{\baselinestretch}{1}
\small\normalsize
\begin{center}

\textbf{{Controllability and Tracking of Ensembles: \\
An Optimal Transport Theory Viewpoint}}\\

\ \\
\ \\
 Reza Hadadi
\\ Department of Electrical and Computer Engineering\\
University of Maryland\\
College Park, MD 20784 \\
\texttt{rezahd@umd.edu} 
\vspace{0.4in}

October 2021
\end{center}
\def\abstract{
  \vspace{0.5in}
\begin{center}%
{\bfseries \abstractname\vspace{-.5em}}%
\end{center}
\quotation
}

\def\endabstract{\par \endquotation
}
\begin{abstract}

    \noindent This paper explores the controllability and state tracking of ensembles from the perspective of optimal transport theory. Ensembles, characterized as collections of systems evolving under the same dynamics but with varying initial conditions, are a fundamental concept in control theory and applications. By leveraging optimal transport, we provide a novel framework for analyzing and solving the state tracking problem of ensembles, particularly when state observations are limited and only accessible at discrete time points. This study establishes connections between the ensemble dynamics and finite-horizon optimal control problems, demonstrating that the problem can be reformulated as a com\-pu\-ta\-tion\-al\-ly efficient linear program using Kantorovich's formulation of optimal transport. We raise notions of observability and controllability for nonlinear ensembles, and propose methods for state tracking in Gaussian output dis\-tri\-bu\-tions settings. Numerical examples and theoretical insights are provided to validate the approach, highlighting the utility of optimal transport in ensemble control problems.
    \end{abstract}


\pagestyle{plain} \pagenumbering{roman} \setcounter{page}{2}
    \cleardoublepage
    \phantomsection
    \addcontentsline{toc}{chapter}{Table of Contents}
    \renewcommand{\contentsname}{Table of Contents}
\renewcommand{\baselinestretch}{1}
\small\normalsize
\tableofcontents 
\newpage
\newpage
\newpage
\newpage
\setlength{\parskip}{0em}
\renewcommand{\baselinestretch}{2}
\small\normalsize

\setcounter{page}{1}
\pagenumbering{arabic}

\renewcommand{\thechapter}{1}

\chapter{Introduction}

There are many circumstances in engineering and sciences in which a great (finite or infinite) number of particles, agents, vehicles, animals, systems etc., are interacting or evolving with each other over time in a dynamic fashion such that the evolution of their so called \textit{states} in time may be characterized by a set of differential equations, or more generally, by dynamical systems in its mathematical sense. Ideally, we as engineers are interested in influencing such interactions in a way that the overall collective behavior of the agents possess some specified characteristics of desire. Furthermore, there are cases in which it is of importance not to influence the agents states  necessarily, but to track or specify states of the agents over time based on some set of observations which are accessible to us only at certain time instances, and which are a priori known to be affected by the agents states. In this paper, we are aiming at studying a major class of collections of agents, called ensembles.\\

Ensembles are considered that are collections of systems in finite or infinite number, where all systems evolve under the same dynamic equations with (possibly) different initial states, while they are not distinguishable individually. Instead, some observations, as functions, of the joint states of the systems are available. On top of that, the limitations imposed by the sensing instruments brings about observations which are available only at some time moments, say $t=0,1,2,...,T$. In this setting our goal is to interpolate, a.k.a track, the joint states or their distributions at any $t \in [0, T]$, by using optimal transport theory. Optimal transport theory has to do with analysing and obtaining the map(s) between two measure spaces which transfer distributions on one space to the other in an optimal way with respect to a given cost function. The way it proves useful in the tracking problem of ensembles is twofold: If the ensemble dynamics is optimal in the sense that it can be cast into the solution of a finite-horizon optimal control problem with a quadratic cost function in terms of the control input, it is shown that the problem may also be translated into a problem of optimal transport between the sampled observations at $t=0,1,2,...,T$ when regarded as probability distributions. Then, Kantorovich formulation of the optimal transport problem will result in computationally efficient solution(s) by transforming the original problem into a linear program with some easily-handled constraints. Furthermore, as we shall see later, for a large class of the cost functions of the corresponding optimal control problem, optimal transport theory will provide a one-shot algorithm to obtain the distributions of the states, in the sense that, once an optimal transport map is identified, the time-varying distribution or the so called \textit{displacement interpolant} is obtained simultaneously.\\

\section{Names, Notations and Conventions}
Throughout this paper, we will use some notations and stick with a few conventions as follows. $\mathbb{N}=\{1,2,3,...\}$ is the set of positive integers. $\mathbb{R}^n$ and $\mathbb{R}^{n \times m}$, for $m,n\in \mathbb{N}$, denote the set of real-valued $n$-vectors and $n$-by-$m$ matrices, respectively. The symbol $||\cdot||$ represents the Euclidean norm. For a matrix $A \in \mathbb{R}^{n \times m}$, $\operatorname{Tr}(A)$ is the trace of $A$, and $\ker A$ is the kernel (or null space) of $A$.

The Borel $\sigma$-algebra on a topological space $\mathcal{X}$ is denoted by $\mathcal{B}(\mathcal{X})$, and the notation $\mathcal{P}(\mathcal{X})$ represents the space of Borel probability measures on $\mathcal{X}$, i.e., probability measures on  $(\mathcal{X},\mathcal{B}(\mathcal{X}))$. The Borel $\sigma$-algebra $\mathcal{B}(\mathbb{R}^n)$, $n\in \mathbb{N}$, is defined with respect to the usual topology on $\mathbb{R}^n$, so $\mathcal{P}(\mathbb{R}^n)$ is well understood. Also, a measure $\mu \in \mathcal{P}(\mathbb{R}^n)$ is called discrete if there exists a countable (finite or infinite) set $A$ with $\mu(\mathbb{R}^n)=\mu(A)$, and the set of all discrete measures in $\mathcal{P}(\mathbb{R}^n)$ is denoted by $\mathcal{P}_d(\mathbb{R}^n)$. For two topological spaces $\mathcal{X}_1$ and $\mathcal{X}_2$, a function $T: \mathcal{X}_1 \rightarrow \mathcal{X}_2$ is called Borel, if it is $\mathcal{B}(\mathcal{X}_1)/ \mathcal{B}(\mathcal{X}_2)$ measurable.

Given a measure space $(X,\mathcal{M}_X,\mu)$, a measurable space $(Y,\mathcal{M}_Y)$, and a measurable function $f: X \rightarrow Y$, the push-forward (or image measure) of $\mu$ under $f$, denoted by $f_{\#}\mu$, is defined by $f_{\#}\mu(S):=\mu(f^{-1}(S))$, for every $S \in \mathcal{M}_Y$, where $f^{-1}(A)=\{x \in X: f(x) \in A\}$, $A \in Y$, is the inverse image of $A$ under $f$. (Note that $f$ need not be one-to-one in order for $f^{-1}(\cdot)$ to be well-defined.) On the other hand, when two measure spaces $(X,\mathcal{M}_X,\mu)$ and $(Y,\mathcal{M}_Y,\nu)$ are considered, a measurable function $f: X \rightarrow Y$ is said to be \textit{measure-~preserving}, if $f_{\#}\mu=\nu$ holds.

For a random variable $\mathbf{x}: \Omega \rightarrow \mathbb{R}^k$, $k \in \mathbb{N}$, on a probability space $(\Omega,\mathcal{F},\mathbb{P})$, the probability law of $\mathbf{x}$, denoted by $\operatorname{law}(\mathbf{x})$, is defined by $\operatorname{law}(\mathbf{x})(B)=\mathbb{P}(\mathbf{x}^{-1}(B))$, for each $B \in \mathcal{B}(\mathbb{R}^k)$, namely, $\operatorname{law}(\mathbf{x}):=\mathbf{x}_{\#}\mathbb{P}$. The notation $\mathbf{y} \sim \mu$, for random variable $\mathbf{y}$ and measure $\mu$, is common throughout the paper, and serves to mean $\operatorname{law}(\mathbf{y})=\mu$. As far as the typeface is concerned, we will write in boldface only the random variables $\mathbf{x}$, and random processes $\mathbf{x}(t)$.

Given two topological spaces $\mathcal{X}$ and $\mathcal{Y}$,  and a measure $\pi \in \mathcal{P}(\mathcal{X} \times \mathcal{Y})$, the \textit{marginals} on $\mathcal{X}$ and $\mathcal{X}$ are defined by $X_{\#}\pi$ and $Y_{\#} \pi$, respectively, where $X: (x,y) \mapsto x$ and $Y: (x,y) \mapsto y$. Also, for given $\mu \in \mathcal{P}(\mathcal{X})$ and $\nu \in \mathcal{P}(\mathcal{Y})$, the symbol $\Pi(\mu, \nu)$ denotes the set of all joint probability measures $\pi \in \mathcal{P}(\mathcal{X} \times \mathcal{Y})$ whose marginals are $\mu$ and $\mu$.
\section{Organization}
The current paper is organized in the following way: chapter 2 first includes a small section on optimal transport theory in order to provide the related notations and concepts for the reference of future developments. Then the setup and the solution of the linear quadratic problem with stochastic end-points which we will be dealt with frequently in the study of ensemble tracking problem will be covered with the viewpoint of optimal transport theory.  Subsequently, chapter 3 addresses controlling, and more extensively, state tracking of ensembles by introducing the notions of ensemble observability and ensemble controllability as well as reviewing the literature therein. While the definitions are adopted for general (time-varying and nonlinear) cases, the focus of this chapter will mainly be on linear time-varying ensembles. Also, in the same chapter, a controllability measure for nonlinear ensembles influenced by control inputs over just a bounded portion of the state space is proposed. And finally, chapter 4 is meant to summarize the conclusions and provide recommendations on the possible extensions and improvements on the solutions of the problems posed here, which we may be adopted in the future.

\renewcommand{\thechapter}{2}

\chapter{Stochastic Control and Optimal Transport}
Consider a linear time-varying control system
\begin{equation}
    \dot{x}(t)=A(t)x(t)+B(t)u(t),  \hspace{10pt} t_0\leq t\leq t_1 \label{s1}
\end{equation}
where $A: [t_0,t_1] \rightarrow \mathbb{R}^{n \times n}$ and $B: [t_0,t_1] \rightarrow \mathbb{R}^{n \times m}$ are continuous, $x(t) \in \mathbb{R}^n$ is the system's state at time $t$, and the piecewise continuous function $u: [t_0,t_0] \rightarrow \mathbb{R}^{m}$ is the control input to the system. In general, a function $f: [a,b] \rightarrow \mathbb{R}^k$ is called \textit{piecewise continuous} if it is continuous except for (at most) finitely many points, and it has finite left and right limits (when they apply) everywhere on $[a,b]$. Further assuming the system \eqref{s1} is controllable over $[t_0,t_1]$, a classic result \cite[Corollary~1]{Kalman} states that, given $x_0, x_1 \in \mathbb{R}^n$, the minimum 
\begin{equation} \label{min_energy}
    \inf_{u \in \mathcal{U}[(x_0,t_0);(x_1,t_1)]} \int_{t_0}^{t_1} ||u(t)||^2\mathrm{d}t
\end{equation}
of the control energy over the set
\begin{equation}
   \mathcal{U}[(x_0,t_0);(x_1,t_1)]=\big\{u:[t_0,t_1] \rightarrow \mathbb{R}^m \big| u \text{ piecewise continuous,} \hspace{5pt} x^u(t_0)=x_0,\hspace{5pt} x^u(t_1)=x_1\big\}
\end{equation}
of admissible control inputs is equal to
\begin{equation} \label{energy_norm}
    ||x_0-\Phi_{(t_0,t_1)}x_1||_{W_{(t_0,t_1)}^{-1}}=(x_0-\Phi_{(t_0,t_1)}x_1)^{\top}W_{(t_0,t_1)}^{-1}(x_0-\Phi_{(t_0,t_1)}x_1),
\end{equation}
where $\Phi_{(t,t')}$ is the state transition matrix \underline{from $t'$ to $t$} associated with the matrix $A(\cdot)$, and 
\begin{equation} \label{ctrb_def}
W_{(t,t')}\overset{\text{def}}{=}\int_{t}^{t'} \Phi_{(t,\tau)}B(\tau)B(\tau)^{\top}\Phi_{(t,\tau)}^{\top}\mathrm{d}\tau
\end{equation}
is the controllability Gramian \underline{from $t$ to $t'$}. Furthermore, this minimum control energy is achieved by the optimal control input (\cite[Theorem~5-bis]{Kalman})
\begin{equation}
    u^{\star}(t)\overset{\text{def}}{=}-B(t)^{\top}\Phi_{(t_1,t)}^{\top} W_{(t_1,t_0)}^{-1}\big( x_1-\Phi_{(t_1,t_0)}x_0\big)=B(t)^{\top}\Phi_{(t_0,t)}^{\top} W_{(t_0,t_1)}^{-1}\big( \Phi_{(t_0,t_1)}x_1-x_0\big),
\end{equation} 
equivalently represented by the feedback law
\begin{equation} \label{u_star}
    u^{\star}(t,x)\overset{\text{def}}{=}-B(t)^{\top}\Phi_{(t_1,t)}^{\top} W_{(t_1,t)}^{-1}\big( x_1-\Phi_{(t_1,t)}x\big)=B(t)^{\top}W_{(t,t_1)}^{-1}\big( \Phi_{(t,t_1)}x_1-x\big).
\end{equation} 
This problem is commonly called the (minimum-energy) \textit{linear quadratic regulator} (LQR) \cite{Liberzon}, the deterministic linear optimal regulator \cite{Kwakernaak}, or the linear-quadratic \cite{Bertsekas, BrysonHo} problem with fixed terminal states.

\begin{remark}
Having shown that an optimal solution of \eqref{min_energy} is represented by the feedback law in \eqref{u_star} and acknowledging that any continuous feedback law $u: [t_0,t_1] \times \mathbb{R}^n \rightarrow \mathbb{R}^m$ gives rise to a continuous control input $\tilde{u}: t \mapsto u(t,x^u(t))$, with $x^u(\cdot)$ the continuously differentiable solution of $\dot{x}(t)=A(t)x(t)+B(t)u(t,x(t))$, we could have replaced the set $ \mathcal{U}[(x_0,t_0);(x_1,t_1)]$ of piecewise continuous control inputs in \eqref{min_energy} with the set 
\begin{equation}
    \overline{\mathcal{U}}[(x_0,t_0);(x_1,t_1)]\overset{\text{def}}{=}\big\{u: [t_0,t_1] \times \mathbb{R}^n \rightarrow \mathbb{R}^m \big| u \text{ continuous,} \hspace{5pt} x^u(t_0)=x_0,\hspace{5pt} x^u(t_1)=x_1\big\}
\end{equation}
of continuous feedback laws, without loss of generality. Then the optimization equivalent to \eqref{min_energy} would be
\begin{equation} \label{min_energy_equiv}
     \inf_{u \in  \overline{\mathcal{U}}[(x_0,t_0);(x_1,t_1)]} \int_{t_0}^{t_1} ||u(t,x^u(t))||^2\mathrm{d}t.
\end{equation}
\end{remark}
Now, if we generalize this setting to the case where the initial and final states are Gaussian random variables $\mathbf{x}_0 \sim \mathcal{N}(m_0, \Sigma_0)$ and $\mathbf{x}_1 \sim \mathcal{N}(m_1, \Sigma_1)$, respectively, then the minimum  of the average  control energy to transfer the system from $(\mathbf{x}_0, t_0)$ to $(\mathbf{x}_1,t_1)$ will~be
\begin{align}
    \inf_{u \in   \overline{\mathcal{U}}[(\mathbf{x}_0,t_0);(\mathbf{x}_1,t_1)]}&\mathbb{E}\left[ \int_{t_0}^{t_1} ||u(t,\mathbf{x}^u(t))||^2\mathrm{d}t\right] \label{LQR_gaussian}\\
    =&\mathbb{E}\left[\inf_{u \in   \overline{\mathcal{U}}[(\mathbf{x}_0,t_0);(\mathbf{x}_1,t_1)]} \int_{t_0}^{t_1} ||u(t,\mathbf{x}^u(t))||^2\mathrm{d}t\right]\\
    =&\mathbb{E}\left[(\mathbf{x}_0-\Phi_{(t_0,t_1)}\mathbf{x}_1)^{\top}W_{(t_0,t_1)}^{-1}(\mathbf{x}_0-\Phi_{(t_0,t_1)}\mathbf{x}_1)\right] \label{eq:22}\\
    =&\mathbb{E}\left[\mathbf{x}_0^{\top}W_{(t_0,t_1)}^{-1}\mathbf{x}_0\right]-2\mathbb{E}\left[(\Phi_{(t_0,t_1)}\mathbf{x}_1)^{\top}W_{(t_0,t_1)}^{-1}\mathbf{\mathbf{x}}_0\right]\notag \\
    &\hspace{30pt}+\mathbb{E}\left[\mathbf{x}_1^{\top}(\Phi_{(t_0,t_1)}^{\top}W_{(t_0,t_1)}^{-1}\Phi_{(t_0,t_1)})\mathbf{x}_1\right] \\
    =& \operatorname{Tr}\big(W_{(t_0,t_1)}^{-1}\mathbb{E}[\mathbf{x}_0 \mathbf{x}_0^{\top}]-2\Phi_{(t_0,t_1)}^{\top}W_{(t_0,t_1)}^{-1}\mathbb{E}\left[\mathbf{x}_0\mathbf{x}_1^{\top}\right] \notag \\
    &\hspace{30pt} +\Phi_{(t_0,t_1)}^{\top}W_{(t_0,t_1)}^{-1}\Phi_{(t_0,t_1)}\mathbb{E}[\mathbf{x}_1 \mathbf{x}_1^{\top}] \big) \label{eq:1111}\\
    =&\operatorname{Tr}\big(W_{(t_0,t_1)}^{-1} (\Sigma_0+m_0m_0^{\top})-2\Phi_{(t_0,t_1)}^{\top}W_{(t_0,t_1)}^{-1}(\Sigma_{0,1}+m_0m_1^{\top}) \notag \\
    &\hspace{30pt} +\Phi_{(t_0,t_1)}^{\top}W_{(t_0,t_1)}^{-1}\Phi_{(t_0,t_1)}(\Sigma_1+m_1m_1^{\top})\big),
\end{align}
where $\Sigma_{0,1}=\mathbb{E}[(\mathbf{x}_0 -m_0)(\mathbf{x}_1-m_1)^{\top}]$, equality \eqref{eq:22} is due to \eqref{energy_norm}, and equality $\eqref{eq:1111}$ is by the identities $\operatorname{Tr}(ABC)=\operatorname{Tr}(CAB)=\operatorname{Tr}(BCA)$. See \cite{Kalman} for a similar result in the spacial case that the system initiated at a zero-mean Gaussian distribution is transferred to the origin.

Motivated by  \cite{Linear Dynamical System}, the next natural step is to study the following stochastic control problem, which is the main problem discussed in this chapter, and will be referred to in later chapters
\begin{subequations} \label{p1}
\begin{align}
    \inf_{u \in\overline{\mathcal{U}}[(\mathbf{x}_0,t_0);(\mathbf{x}_1,t_1)]} &\mathbb{E}\left\{ \int_{0}^1 \frac{1}{2}||u(t,\mathbf{x}^{u}(t))||^2\mathrm{d}t\right\} \\
    \text{subject to:} \hspace{10pt}&\dot{\mathbf{x}}^u(t)=A(t)\mathbf{x}^u (t) + B(t)u(t,\mathbf{x}^{u}(t)), \label{eq:2}\\
    &\mathbf{x}^{u}(0)=\mathbf{x}_0 \sim \mu_0, \hspace{10pt} \mathbf{x}^{u}(1)=\mathbf{x}_1\sim \mu_1,\label{eq:3}
\end{align}
\end{subequations}
where $\mu_0,\mu_1\in \mathcal{P}(\mathbb{R}^n)$. This problem is an extension of the problem \eqref{LQR_gaussian} to the case where the system $\eqref{eq:2}$ is to be steered from an initial state distributed according to the absolutely continuous measure $\mathrm{d}\mu_0(x)=\rho_0(x)\mathrm{d}x$, to the final state distribution according to the absolutely continuous measure $\mathrm{d}\mu_1(x)=\rho_1(x)\mathrm{d}x$. An important assumption here is the controllablity of $(A(\cdot),B(\cdot))$, namely invertibility of the associated controllabilty Gramian $W_{(0,1)}$, as defined in \eqref{ctrb_def}. Then, since we are going to make a connection between the optimal solution of this standard problem and the solution of an optimal transport problem, let us have some preliminaries on optimal transport theory. The main result of this chapter is stated in \thref{Th:1}.

\section{Background on Optimal Transport Theory}
The origin of optimal transport (OT) theory dates back to early 1780's, when the French geometer Gaspard Monge in the seminal memoir \cite{Monge} addressed the problem of finding the optimal way of moving some amount of soil extracted from the ground at a given location to a target location, with respect to a specified transportation cost \cite{Old & New,Evans-Gangbo}. Monge's formulation of the optimal transport problem, given $\mu_0,\mu_1 \in \mathcal{P}(\mathbb{R}^n)$ and measurable function $c: \mathbb{R}^n \times \mathbb{R}^n \rightarrow [0,+\infty]$, is the optimization 
\begin{align} \label{Monge}
  \inf_{T_{\#}\mu_0=\mu_1} \int_{\mathbb{R}^n} c\big(x,T(x)\big) \mathrm{d}\mu_0(x)
\end{align}
over the set of measure-preserving Borel maps $T: \mathbb{R}^n \rightarrow \mathbb{R}^n$ \cite{Gangbo-McCann}. Such maps are usually called \textit{transport maps}, and the minimizers of \eqref{Monge}, if they exist, are called \textit{optimal transport maps} between $\mu_0$ and $\mu_1$. Regardless of the existence of an optimal transport map between $\mu_0$ and $\mu_1$, the optimal value of \eqref{Monge} is denoted by $\mathcal{C}_M(\mu_0, \mu_1)$, and will be called the \textit{Monge optimal cost} between $\mu_0$ and $\mu_1$. Note that, although Monge originally formulated \eqref{Monge} for the special case $c(x,y)=||x-y||$, and for absolutely continuous $\mu_0,\mu_1 \in \mathcal{P}(\mathbb{R}^3)$ \cite[pp.~43]{Old & New}, the literature sometimes associates him with \eqref{Monge} in the general case (see e.g., \cite{Topics}). 

In the 1940's, a relaxed version of the Monge's optimal transport problem was studied \cite{Kantoro1,Kantoro2} by Leonid Kantorovich, Russian mathematician and Nobel Prize laureate in economics. He formulated the problem as the optimization
\begin{align} \label{Kantoro}
  \inf_{\gamma \in \Pi(\mu_0,\mu_1)}\int_{\mathbb{R}^n \times \mathbb{R}^n} c(x,y) \mathrm{d}\gamma(x,y)
\end{align}
over the set $\Pi(\mu_0,\mu_1)$ of Borel probability measures on $\mathbb{R}^n \times \mathbb{R}^n$ with marginals $\mu_0$ and $\mu_1$. Such measures are called \textit{transport plans} between $\mu_0$ and $\mu_1$, and the minimizers of \eqref{Kantoro} are called the \textit{optimal transport plans} between $\mu_0$ and $\mu_1$. Consider that $\Pi(\mu_0,\mu_1)$ is non-empty, since the product measure $\mu_0 \otimes \mu_1 \in \Pi(\mu_0,\mu_1)$, and that regardless of the existence of an optimal transport plan between $\mu_0$ and $\mu_1$, the optimal value of \eqref{Kantoro} is denoted by $\mathcal{C}_K(\mu_0, \mu_1)$ and called the \textit{Kantorovich optimal cost} between $\mu_0$ and $\mu_1$. Also note that in \eqref{Kantoro} the objective function is linear in $\gamma$, and the optimization domain $\Pi(\mu_0,\mu_1)$ is convex. So, it can be shown that under appropriate assumptions on the function $c$, existence of an optimal transport plan is guaranteed (see e.g., \cite[Theorem 1.3]{Topics} and \cite[Theorem 2.1]{Ambrosio}). Also, for a proof of the existence of an optimal transport plan associated with the cost $c(x,y)=||x-y||^2$, see \cite[Proposition 2.1]{Topics}.

Kantorovich's optimization \eqref{Kantoro} for the cost $c(x,y)=||x-y||^p$, $p \in [1, \infty)$, induces a metric, called \textit{p\nobreakdash-Wasserstein metric} (distance), defined by
\begin{equation}
    W_p(\mu_0,\mu_1)\overset{\text{def}}{=}\left(\inf_{\gamma \in \Pi(\mu_0,\mu_1)}\int_{\mathbb{R}^n \times \mathbb{R}^n} \left\Vert y-x\right\Vert^p \mathrm{d}\gamma(x,y)\right)^{1/p},
\end{equation}
on the space $\mathcal{P}_p(\mathbb{R}^n)=\{\mu \in \mathcal{P}(\mathbb{R}^n): \int_{\mathbb{R}^n} ||x-x_0||^p\mathrm{d}\mu(x)<\infty, \hspace{5pt}x_0 \in \mathbb{R}^n\}$ of Borel probability measures with finite moment of order $p$ \cite[Theorem 7.3]{Topics}.

\begin{remark} \thlabel{relaxation}
The claim that \eqref{Kantoro} is a relaxed version of \eqref{Monge} can be established in two steps. First, consider that, if $T_{\#}\mu_0=\mu_1$ holds for a map $T$, then the induced measure $\gamma_{_T}=\tilde{T}_{\#}\mu_0$, with $\tilde{T}(x)=(x,T(x))$,
does indeed belong to $\Pi(\mu_0,\mu_1)$, since for every $A_1, A_2\in \mathcal{B}(\mathbb{R}^n)$ we have
\begin{subequations}
\begin{align}
    \gamma_{_T}(A_1 \times \mathbb{R}^n)=&\mu_0 \left(\{x\in \mathbb{R}^n: \left(x,T(x)\right)\in A_1 \times \mathbb{R}^n\}\right)=\mu_0(A_1)\\
   \gamma_{_T}(\mathbb{R}^n \times A_2)=&\mu_0 \left(\{x\in \mathbb{R}^n: \left(x,T(x)\right)\in \mathbb{R}^n \times A_2 \}\right)=\mu_0(T^{-1}(A_2))=T_{\#}\mu_0(A_2)=\mu_1(A_2).
\end{align}
\end{subequations}
Second, the change of variables theorem \cite[Theorem 3.6.1]{Bogachev} implies 
\begin{equation}
    \int_{\mathbb{R}^n \times \mathbb{R}^n} c(x,y)\mathrm{d}\gamma_{_T}(x,y)=\int_{\mathbb{R}^n} c(\tilde{T}(x))\mathrm{d}\mu_0(x)=\int_{\mathbb{R}^n} c(x,T(x))\mathrm{d}\mu_0(x),
\end{equation}
which means that the objective function in \eqref{Monge} evaluated for the transport map $T$ coincides with the value of the objective function in \eqref{Kantoro} for the transport plan $\tilde{\gamma}$. 
\end{remark}
A consequence following from \thref{relaxation} is that, in general, the optimal value of Kantorovich's problem \eqref{Kantoro} provides a lower bound for that of Monge's problem \eqref{Monge}, namely,
\begin{equation} \label{K<M}
    \mathcal{C}_K(\mu_0, \mu_1) \leq \mathcal{C}_M(\mu_0, \mu_1).
\end{equation}
However, a result by Gangbo and McCann \cite{Gangbo-McCann} for a specific class of costs of the form $c(x,y)=h(x-y)$, including the cost $c(x,y)=||x-y||^2$, implies that \eqref{Kantoro} admits a unique optimal transport plan $\gamma_{_T}$ which is induced by a transport map $T$ between $\mu_0$ and $\mu_1$, provided that $\mu_0$ and $\mu_1$ are absolutely continuous with respect to Lebesgue measure, and $\mathcal{C}_K(\mu_0, \mu_1)<\infty$. Since this result, especially applied to $c(x,y)=||x-y||^2$, will be of great use throughout the paper, we will summarize it in the following theorem.

\begin{theorem}[When $\mathcal{C}_K(\mu_0, \mu_1) = \mathcal{C}_M(\mu_0, \mu_1)$] \thlabel{M=K}
Given the cost $c(x,y)=||x-y||^2$ and Borel probability measures $\mu_0,\mu_1 \in \mathcal{P}(\mathbb{R}^n)$ in \eqref{Monge} and \eqref{Kantoro}, if $\mu_0$ is absolutely continuous with respect to Lebesgue measure, and $W_2(\mu_0, \mu_1)<\infty$, then there exists a unique optimal transport plan between $\mu_0$ and $\mu_1$ defined by $\gamma_{_T}=\tilde{T}_{\#}\mu_0$, with $\tilde{T}(x)=(x,T(x))$, induced by a transport map $T$ between $\mu_0$ and $\mu_1$. In that case, 
\begin{equation} \label{Mon=Kan}
    \inf_{T_{\#}\mu_0=\mu_1} \int_{\mathbb{R}^n} ||T(x)-x||^2 \mathrm{d}\mu_0(x)=\inf_{\gamma \in \Pi(\mu_0,\mu_1)}\int_{\mathbb{R}^n \times \mathbb{R}^n} \left\Vert y-x\right\Vert^2 \mathrm{d}\gamma(x,y).
\end{equation}
\end{theorem}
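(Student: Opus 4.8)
The plan is to specialize the Gangbo--McCann argument to the quadratic cost, which in this case is Brenier's theorem. First I would exploit the algebraic identity $\|x-y\|^2=\|x\|^2+\|y\|^2-2\langle x,y\rangle$. Because $W_2(\mu_0,\mu_1)<\infty$, both $\int\|x\|^2\,\mathrm{d}\mu_0(x)$ and $\int\|y\|^2\,\mathrm{d}\mu_1(y)$ are finite and, crucially, independent of the chosen plan $\gamma\in\Pi(\mu_0,\mu_1)$, since they depend only on the marginals. Hence minimizing $\int\|x-y\|^2\,\mathrm{d}\gamma$ over $\Pi(\mu_0,\mu_1)$ is equivalent to maximizing the correlation functional $\int\langle x,y\rangle\,\mathrm{d}\gamma$. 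Existence of a Kantorovich optimizer $\gamma^{\star}$ under the stated hypotheses is already guaranteed by the lower-semicontinuity and compactness results cited earlier for this cost.

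Next I would establish the structural property that the support of any optimal plan is cyclically monotone: if $\gamma^{\star}$ were optimal yet $\operatorname{supp}\gamma^{\star}$ contained finitely many points $(x_i,y_i)$ violating the monotonicity inequality $\sum_i\langle x_i,y_i\rangle\ge\sum_i\langle x_i,y_{\sigma(i)}\rangle$ for some permutation $\sigma$, one could locally reshuffle mass along these points to strictly increase the correlation, contradicting optimality. With cyclical monotonicity in hand, Rockafellar's theorem yields a proper lower-semicontinuous convex function $\varphi:\mathbb{R}^n\to\mathbb{R}\cup\{+\infty\}$ whose subdifferential $\partial\varphi$ contains $\operatorname{supp}\gamma^{\star}$; that is, $\gamma^{\star}$ is concentrated on the graph of $\partial\varphi$.

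Then I would invoke the absolute continuity of $\mu_0$. A finite convex function is locally Lipschitz, hence differentiable Lebesgue-almost everywhere by Rademacher's theorem, so it is differentiable $\mu_0$-almost everywhere; at such points $\partial\varphi$ reduces to the single vector $\nabla\varphi(x)$. Consequently $\gamma^{\star}$ is concentrated on the graph of the measurable map $T:=\nabla\varphi$, which forces $\gamma^{\star}=\tilde{T}_{\#}\mu_0$ with $\tilde{T}(x)=(x,T(x))$, and reading off the second marginal gives $T_{\#}\mu_0=\mu_1$, so $T$ is a genuine transport map. Uniqueness follows because every optimal plan is supported on the graph of the gradient of some such convex potential, and that gradient is pinned down $\mu_0$-a.e.; two optimal plans must therefore coincide.

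Finally, the equality \eqref{Mon=Kan} drops out: the map $T$ is admissible in Monge's problem, so $\mathcal{C}_M(\mu_0,\mu_1)\le\int\|T(x)-x\|^2\,\mathrm{d}\mu_0(x)=\int\|y-x\|^2\,\mathrm{d}\gamma^{\star}(x,y)=\mathcal{C}_K(\mu_0,\mu_1)$, which combined with the reverse inequality \eqref{K<M} from \thref{relaxation} yields the claimed identity. I expect the main obstacle to be the two structural steps at the heart of the argument, namely rigorously justifying that an optimal plan's support is cyclically monotone (which in full generality leans on Kantorovich duality) and then upgrading the merely almost-everywhere differentiability of $\varphi$ into a bona fide measurable transport map defined $\mu_0$-a.e.; by comparison, the reduction to the correlation functional and the cost-equality step are essentially bookkeeping.
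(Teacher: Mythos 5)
Your overall plan is a sound, self-contained route --- it is essentially a proof of Brenier's theorem, whereas the paper's own proof is a two-line citation of Theorem 3.7 in Gangbo--McCann plus the observation that \eqref{Mon=Kan} follows from \eqref{K<M} once the optimal plan is known to be induced by a map --- but there is a genuine gap at your very first step. The hypothesis $W_2(\mu_0,\mu_1)<\infty$ does \emph{not} imply that $\int \|x\|^2\,\mathrm{d}\mu_0(x)$ and $\int \|y\|^2\,\mathrm{d}\mu_1(y)$ are finite: take $\mu_0=\mu_1$ absolutely continuous with a heavy tail, e.g.\ density proportional to $(1+\|x\|)^{-(n+2)}$ on $\mathbb{R}^n$, so that the second moment is infinite while the identity coupling gives $W_2(\mu_0,\mu_1)=0<\infty$. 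Under the stated hypotheses only the transport cost is finite, so the decomposition $\|x-y\|^2=\|x\|^2+\|y\|^2-2\langle x,y\rangle$ cannot be integrated term by term (it is $\infty-\infty$), the correlation functional $\int\langle x,y\rangle\,\mathrm{d}\gamma$ need not even be defined, and your derivation of cyclical monotonicity by ``reshuffling mass to strictly increase the correlation'' collapses. The repair is standard but changes the argument: establish $c$-cyclical monotonicity of the support directly with the cost, i.e.\ show $\sum_i \|x_i-y_i\|^2 \le \sum_i \|x_i-y_{\sigma(i)}\|^2$ by the same local-reshuffling construction; expanding the squares \emph{inside this finite sum} is legitimate without any moment assumption and recovers exactly the Rockafellar-type inequality $\sum_i\langle x_i,\,y_{\sigma(i)}-y_i\rangle\le 0$, after which your Rockafellar--Rademacher steps proceed. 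This finite-cost (rather than finite-moment) regime is precisely what the Gangbo--McCann theorem invoked by the paper is formulated to handle.

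Two smaller points. First, your uniqueness step is under-justified as written: two optimal plans could a priori be concentrated on the graphs of gradients of two \emph{different} convex potentials, and nothing you have said pins the gradient down across plans. The standard fix is convexity of the problem: if $\gamma_1,\gamma_2$ are both optimal then so is $\tfrac{1}{2}(\gamma_1+\gamma_2)$, which by your own structure theorem must be concentrated on the graph of a single map; a measure concentrated on a union of two graphs that is itself graphical forces $T_1=T_2$ $\mu_0$-a.e., and hence $\gamma_1=\gamma_2$. Second, the potential $\varphi$ produced by Rockafellar's theorem may take the value $+\infty$, so before applying Rademacher you should note that $\mu_0$ is carried by $\operatorname{dom}\varphi$ (since the first marginal of $\gamma^{\star}$ is $\mu_0$ and $\partial\varphi\neq\varnothing$ only there), and that the boundary of the convex set $\operatorname{dom}\varphi$ is Lebesgue-null, so absolute continuity of $\mu_0$ lets you work on the interior, where local Lipschitzness holds. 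Your final step --- admissibility of $T$ in Monge's problem combined with \eqref{K<M} to get \eqref{Mon=Kan} --- is correct and coincides with the paper's.
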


\begin{proof}
The existence and uniqueness of the optimal transport plan is a direct implication of Theorem 3.7 in \cite{Gangbo-McCann} for $c(x,y)=||x-y||^2$. Also, equation \eqref{Mon=Kan} follows from \eqref{K<M} and the fact that the optimal transport plan $\gamma_{_T}$ is indeed induced by a transport map $T$.
\end{proof}
Whenever, in the settings of the following sections, the hypotheses of \thref{M=K} are valid, we may interchangeably use the name 2-Wasserstein distance for the square root of the optimal value of either equivalent formulations in \eqref{Mon=Kan}.

\begin{remark} \thlabel{r:1}
An elegant result by Benamou and Brenier \cite{fluid dynamics} of which we will make great use in our developments later, provides a connection between $2$-Wasserstein distance of two measures, say $\mu_0$ and $\mu_1$, with corresponding densities $\rho_0$ and $\rho_1$, and a minimization problem over the pairs $(\rho,u)$ of time-varying densities and velocity fields, for a fluid with initial and final densities $\rho_0$ and $\rho_1$, respectively: 
\begin{subequations}
\begin{align}
   W_2(\mu_0,\mu_1)^2= &\inf_{(\rho,u)} \int_{\mathbb{R}^n} \int_0^1 ||u(t,x)||^2 \rho(t,x) \mathrm{d}t \mathrm{d}x &\\
    \text{subject to :} \hspace{10pt} &\frac{\partial \rho}{\partial t}+\nabla \cdot (\rho u)=0 \label{cont:1} &\\
    &\rho(0,x)=\rho_0(x), \hspace{10pt} \rho(1,x)=\rho_1(x), & \forall x \in \mathbb{R}^n \label{cont:2}.
\end{align}
\end{subequations}
In  fluid dynamics and continuum mechanics literature, the equation $\eqref{cont:1}$ is commonly called the \textit{"continuity equation"}, which indeed represents the mass conservation law in the differential form (\cite{Goldstein}). So, from now on, we will refer to any pair $(\rho,u)$ satisfying $\eqref{cont:1}$ as one satisfying the continuity condition. Also, note that, the continuity equation is in general a necessary condition for the flow of continua. However, under some mild conditions, which we will assume later, any pair $(\rho,u)$ satisfying the continuity condition does represent the unique evolution of the density of the flow subject to the velocity field $u$. The way this fluid dynamic representation proves useful will be clear in the next chapter, especially section 3.2.
\end{remark}

\section{Solution of the Stochastic Control Problem by Optimal Transport Theory}
Back to the problem $\eqref{p1}$, let us again consider the optimal feedback control law
\begin{equation}
    u^{\star}\big(t,x;y\big)=-B(t)^{\top}\Phi_{(1,t)}^{\top} W_{(1,t)}^{-1}\big( y-\Phi_{(1,t)}x\big), 
\end{equation}
(see \eqref{u_star}) for the linear quadratic problem over the time interval $[0,1]$ with fixed terminal states $x,y \in \mathbb{R}^n$. First, by reminding ourselves that $\mathcal{U}$ is the set of (piecewise) continuous feedback laws, note that for any $\mathbf{y} \sim \mu_1$, such that $\mathbf{y}=\mathbf{x}^{u}(1)$, $u \in \mathcal{U}$, the feedback control law $u^{\star}(\cdot,\cdot;\mathbf{y})$ is admissible, i.e., $(t,x) \longmapsto u^{\star}\big(t,x;\mathbf{y}\big)$ belongs to $\mathcal{U}$. Then, for fixed $\omega \in \Omega$, where $\Omega$ is some sample space on which the processes $\mathbf{x}^u$ are defined, consider that we have 
\begin{equation}
\int_{0}^1 \frac{1}{2}\left\Vert u^{\star}\big(t,\mathbf{x}^{u^{\star}}(t,\omega);\mathbf{x}^{u}(1,\omega)\big)\right\Vert^2\mathrm{d}t \leq \int_{0}^1 \frac{1}{2}\left\Vert u\big(t,\mathbf{x}^{u}(t,\omega)\big)\right\Vert^2\mathrm{d}t,
\end{equation}
by construction. This implies the feedback control law $(t,x) \longmapsto u^{\star}\big(t,x;\mathbf{x}^u(1)\big)$ outperforms $u$. Hence, the problem {\eqref{p1}} reduces to 
\begin{align}
    \inf_{u \in \mathcal{U}^{\star}} \mathbb{E}\left\{ \int_{0}^1 \frac{1}{2}\left\Vert u\big(t,\mathbf{x}^{u}(t)\big)\right\Vert^2\mathrm{d}t\right\},
\end{align}
where $\mathcal{U}^{\star}=\left\{u \in \mathcal{U}: u(t,x)=u^{\star}\big(t,x;T(\mathbf{x}(0))\big),\hspace{4 pt} T \text{ Borel }\right\}$. Then, we will have
\begin{align}
    &\inf_{u \in \mathcal{U}} \mathbb{E}\left\{ \int_{0}^1 \frac{1}{2}\left\Vert u\big(t,\mathbf{x}^{u}(t)\big)\right\Vert^2\mathrm{d}t\right\}\\
    =&\inf_{u \in \mathcal{U}^{\star}} \mathbb{E}\left\{ \int_{0}^1 \frac{1}{2}\left\Vert u\big(t,\mathbf{x}^{u}(t)\big)\right\Vert^2\mathrm{d}t\right\}\\
    =&\inf_{\{T: T(\mathbf{x}(0))\sim \mu_1\}} \mathbb{E}\left\{ \int_{0}^1 \frac{1}{2}\left\Vert u^{\star}\big(t,\mathbf{x}^{u^{\star}}(t);T(\mathbf{x}(0))\big)\right\Vert^2\mathrm{d}t\right\}\\
    =&\inf_{\{T: T(\mathbf{x}(0))\sim \mu_1\}}\mathbb{E}\left\{ \frac{1}{2}\big[\mathbf{x}(0)-\Phi_{(0,1)}T(\mathbf{x}(0))\big]^{\top}W_{(0,1)}^{-1}\big[\mathbf{x}(0)-\Phi_{(0,1)}T(\mathbf{x}(0)\big]\right\},\\
    =&\inf_{T_{\#}\mu_0=\mu_1}\int_{\mathbb{R}^n} \frac{1}{2}\big[\mathbf{x}(0)-\Phi_{(0,1)}T(\mathbf{x}(0))\big]^{\top}W_{(0,1)}^{-1}\big[\mathbf{x}(0)-\Phi_{(0,1)}T(\mathbf{x}(0)\big]\mathrm{d}\mu_0(x),\\
    =&\inf_{T_{\#}\mu_0=\mu_1}\int_{\mathbb{R}^n} c\big(x,T(x)\big)\mathrm{d}\mu_0(x),\label{eq:100}
\end{align}
where $c: \mathbb{R}^n \times \mathbb{R}^n \rightarrow \mathbb{R}$, with $c(x,y)=\frac{1}{2}\big(x-\Phi_{(0,1)}y\big)^{\top}W_{(0,1)}^{-1}\big(x-\Phi_{(0,1)}y\big)$. However, using the transformation $(\hat{x},\hat{y})=(W_{(1,0)}^{-1/2}\Phi_{(1,0)}x,W_{(1,0)}^{-1/2}y) \sim (\hat{\mu}_0,\hat{\mu}_1)$, the infimum in $\eqref{eq:100}$ can be cast into
\begin{equation}
    \inf_{\hat{T}_{\#}\hat{\mu}_0=\hat{\mu}_1}\int_{\mathbb{R}^n} \left\Vert \hat{T}(\hat{x})-\hat{x}\right\Vert^2 \mathrm{d}\hat{\mu}_0(\hat{x})=\inf_{\hat{\pi} \in \Pi(\hat{\mu}_0,\hat{\mu}_1)}\int_{\mathbb{R}^n \times \mathbb{R}^n}  \left\Vert \hat{y}-\hat{x}\right\Vert^2 \mathrm{d}\hat{\pi}(\hat{x},\hat{y}),\label{eq:200}
\end{equation}
which is the square of the 2-Wasserstein distance $W_2(\hat{\mu}_0,\hat{\mu}_1)$ between $\hat{\mu}_0=[W_{(1,0)}^{-1/2}\Phi_{(1,0)}]_{\#}\mu_0$ and $\hat{\mu}_1=[W_{(1,0)}^{-1/2}]_{\#}\mu_1$. Thus, the minimization in $\eqref{eq:100}$ can be rewritten, again in terms of the cost function $c$, as 
\begin{equation}
\inf_{\pi \in \Pi(\mu_0,\mu_1)}\int_{\mathbb{R}^n \times \mathbb{R}^n}  c(x,y) \mathrm{d}\pi(x,y). \label{eq:201}
\end{equation}
While equivalent representations $\eqref{eq:200}$ of the original stochastic control problem, especially the one searching over joint distributions, are closer to the Kantorovich optimal transport formulation, and hence, benefit from computational efficiencies of linear programming, representation $\eqref{eq:201}$ involves the given initial and final distributions explicitly, a fact which brings notational ease for us in some of the developments in the sequel. 
Now, it is timely to summarize what we have developed so far, in a theorem:
\begin{theorem}\thlabel{Th:1}
The optimal value of the minimization problem $\eqref{p1}$ is equal to the square $W_2(\hat{\mu}_0,\hat{\mu}_1)^2$ of the 2-Wasserstein distance between $\hat{\mu}_0=[W_{(1,0)}^{-1/2}\Phi_{(1,0)}]_{\#}\mu_0$ and $\hat{\mu}_1=[W_{(1,0)}^{-1/2}]_{\#}\mu_1$. Furthermore, this value is achieved by the feedback control law
\begin{equation}
    u^{\star}\big(t,\mathbf{x};T(\mathbf{x}(0))\big)=B(t)^{\top}\Phi_{(1,t)}^{\top} W_{(1,t)}^{-1}\big( T(\mathbf{x}(0))-\Phi_{(1,t)}\mathbf{x\big)},
\end{equation}
where $T$ is the solution of the Monge's optimal transport problem between $\hat{\mu}_0$ and $\hat{\mu}_1$.
\end{theorem}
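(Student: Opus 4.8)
The plan is to establish \thref{Th:1} by assembling the chain of identities already set up in this section: problem \eqref{p1} is first collapsed onto a deterministic optimal transport problem with the quadratic cost $c(x,y)=\frac12(x-\Phi_{(0,1)}y)^{\top}W_{(0,1)}^{-1}(x-\Phi_{(0,1)}y)$, and then a linear change of variables renders this cost into the squared Euclidean distance, at which point \thref{M=K} identifies the optimal value with $W_2(\hat\mu_0,\hat\mu_1)^2$ and supplies the minimizing map $T$. Concretely I would carry out three moves in order: (i) reduce the admissible feedback laws to those that steer each realization to a prescribed terminal value; (ii) evaluate the resulting path cost by the minimum-energy formula \eqref{energy_norm}; and (iii) transport the quadratic Monge problem to the standard quadratic one.

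For (i), fix an arbitrary admissible $u$ and let $\mathbf{y}=\mathbf{x}^{u}(1)\sim\mu_1$ be the terminal random variable it produces. Along each sample path $\omega$, the feedback law $u^{\star}(\cdot,\cdot;\mathbf{y}(\omega))$ of \eqref{u_star}, which steers $\mathbf{x}(0,\omega)$ to the identical endpoint $\mathbf{y}(\omega)$, incurs no more cost than $u$ by the defining optimality of the LQR solution; taking expectations shows it suffices to optimize over feedback laws that target some terminal $\mathbf{y}\sim\mu_1$, i.e.\ over the couplings of $\mu_0$ and $\mu_1$. Step (ii), evaluating the inner integral by \eqref{energy_norm} with terminal $\mathbf{y}$, renders the objective $\mathbb{E}[c(\mathbf{x}(0),\mathbf{y})]$. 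Specializing the coupling to a deterministic Borel map, $\mathbf{y}=T(\mathbf{x}(0))$ with $T_{\#}\mu_0=\mu_1$ (the class $\mathcal{U}^{\star}$), then yields the Monge functional $\int_{\mathbb{R}^n}c(x,T(x))\,\mathrm{d}\mu_0(x)$ of \eqref{eq:100}; that this specialization loses nothing is precisely what \thref{M=K} will guarantee.

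For (iii), I would substitute $(\hat x,\hat y)=(W_{(1,0)}^{-1/2}\Phi_{(1,0)}x,\,W_{(1,0)}^{-1/2}y)$ and verify by a routine quadratic-form identity, using invertibility of $W_{(1,0)}$, that $c(x,y)$ is a constant multiple of $\|\hat y-\hat x\|^2$, so that \eqref{eq:100} equals the squared $2$-Wasserstein Monge problem between $\hat\mu_0=[W_{(1,0)}^{-1/2}\Phi_{(1,0)}]_{\#}\mu_0$ and $\hat\mu_1=[W_{(1,0)}^{-1/2}]_{\#}\mu_1$, as in \eqref{eq:200}. Since $W_{(1,0)}$ is invertible by the standing controllability assumption, the map $W_{(1,0)}^{-1/2}\Phi_{(1,0)}$ is a linear isomorphism, so absolute continuity of $\mu_0$ transfers to $\hat\mu_0$; together with $W_2(\hat\mu_0,\hat\mu_1)<\infty$ this verifies the hypotheses of \thref{M=K}. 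That theorem closes the Monge--Kantorovich gap, gives the value $W_2(\hat\mu_0,\hat\mu_1)^2$, and produces the unique optimal map $\hat T$, whose pullback through the change of variables is the Borel map $T$ entering the announced feedback law.

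The main obstacle I expect is the rigorous justification of step (i): exchanging the infimum with the expectation and, more delicately, arguing that restricting the terminal variable to a deterministic function $T(\mathbf{x}(0))$ of the initial state is genuinely without loss of generality rather than merely furnishing an upper bound. The first part is the pathwise LQR comparison; the second is exactly the Monge-versus-Kantorovich distinction and is only resolved \emph{a posteriori} by \thref{M=K}, whose absolute-continuity hypothesis is therefore essential and must not be dropped. One must also confirm that the composite feedback $(t,x)\mapsto u^{\star}(t,x;T(\mathbf{x}(0)))$ is admissible, namely continuous in $(t,x)$ and delivering the prescribed endpoints, so that the infimum in \thref{Th:1} is actually attained and not merely approached.
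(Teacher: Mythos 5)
Your proposal follows essentially the same route as the paper's own proof: a pathwise comparison with the LQR-optimal feedback law to reduce the problem to terminal maps $T(\mathbf{x}(0))$, evaluation of the resulting cost via the minimum-energy formula \eqref{energy_norm}, the linear change of variables $(\hat{x},\hat{y})=(W_{(1,0)}^{-1/2}\Phi_{(1,0)}x,\,W_{(1,0)}^{-1/2}y)$ leading to \eqref{eq:200}, and the appeal to \thref{M=K} to identify the optimal value with $W_2(\hat{\mu}_0,\hat{\mu}_1)^2$ and extract the optimal map $T$. The admissibility and attainment caveats you flag at the end are glossed over in the paper's proof as well, so your argument is, if anything, slightly more careful on those points.
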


\begin{remark} \thlabel{Gaussian_remark}
As a benchmark, the stochastic control problem \eqref{p1} with Gaussian end-point distributions
$\mu_0 \longleftrightarrow (\nu_0, \Sigma_0)$ and $\mu_1 \longleftrightarrow (\nu_1, \Sigma_1)$ are studied in \cite{Linear Dynamical System}. Also, besides deriving a closed form solution for the optimal control input $u$ (compare with the one in \thref{Th:1} where $u$ is in terms of the optimal map $T$), the time-varying distribution, a.k.a. displacement interpolation, $\mu_t$, $0 \leq t \leq 1$, of the optimal state trajectory $x^u(t)$ is obtained. In fact, by considering that the cost function of the optimal transport problem associated with \eqref{p1} is quadratic (see \eqref{eq:100}), it is shown in \cite{Linear Dynamical System} that $\mu_t$ are Gaussian $\mathcal{N}(\nu_t, \Sigma_t)$ with mean and convariance 
\begin{subequations}
\begin{align}
    \nu_t=&\hat{\Phi}_{(t,0)}\nu_0+\int_0^{t} \hat{\Phi}_{(t,\tau)}B(\tau)B(\tau)^{\top}m(\tau)\mathrm{d}\tau,\\
    \Sigma_t=&\Phi_{(t,0)}W_{(0,t)}\Sigma_0^{-1/2}\Bigg[-\Sigma_0^{1/2}W_{(0,1)}^{-1}\Sigma_0^{1/2}+\left(\Sigma_0^{1/2}W_{(0,1)}^{-1}\Phi_{(0,1)}\Sigma_1 \Phi_{(0,1)}^{\top} W_{(0,1)}^{-1}\Sigma_0^{1/2}\right)^{1/2} \notag\\
    & \hspace{175pt} +\Sigma_0^{1/2}W_{(0,t)}^{-1}\Sigma_0^{1/2}\Bigg]^{1/2} \Sigma_0^{-1/2}W_{(0,t)}\Phi_{(t,0)}^{\top},
\end{align}
\end{subequations}
where $W_{(t_1,t_2)}$ is the controllability Gramian (cf. \eqref{ctrb_def}) from $t_1$ to $t_2$ associated with $\Phi$, and 
\begin{equation}
    m(t)=\hat{\Phi}_{(0,t)}^{\top} \hat{W}_{(1,0)}^{-1}\left(\hat{\Phi}_{(0,1)}\nu_1-\nu_0\right).
\end{equation}
In the equation above, $\hat{W}_{(t_1,t_2)}$ is the controllability Gramian from $t_1$ to $t_2$ associated with $\hat{\Phi}$, and $\hat{\Phi}$ itself is the state transition matrix related to $\left(A(t)-B(t)B(t)^{\top}K(t)\right)$, such that $K(t)$ satisfies the following differential Riccati equation with initial conditions
\begin{subequations}
\begin{align}
    \dot{K}(t)=&-A(t)^{\top}K(t)-K(t)A(t)+K(t)B(t)B(t)^{\top}K(t)\\
    K(0)=&\Sigma_0^{-1/2}\Bigg[\Sigma_0^{1/2} W_{(0,1)}^{-1}\Sigma_0^{1/2}-\left(\Sigma_0^{1/2}W_{(0,1)}^{-1}\Phi_{(0,1)}\Sigma_1\Phi_{(0,1)}^{\top}W_{(0,1)}^{-1}\Sigma_0^{1/2}\right)^{1/2}\Bigg]\Sigma_0^{-1/2}.
\end{align}
\end{subequations}
We will refer to this result later in Chapter 3 where the state tracking problem with Gaussian output distributions is studied.
\end{remark}

Motivated by \thref{r:1}, and with the goal of computational feasibility, we are interested in obtaining the fluid dynamic version of the problem $\eqref{p1}$. An elementary derivation is provided in \cite{Linear Dynamical System}. But the full derivation with the required assumptions is presented in the following theorem:

\begin{theorem} \thlabel{Th:2}
For the stochastic control problem $\eqref{p1}$, we have the equivalent fluid-dynamic representation
\begin{equation*}
    \inf_{u \in \mathcal{U}} \mathbb{E}\left\{ \int_{0}^1 \frac{1}{2}||u(t,\mathbf{x}^{u}(t))||^2\mathrm{d}t\right\}=\inf_{(\rho,u) \in \mathcal{W}} \int_{\mathbb{R}^n} \int_0^1 \frac{1}{2}||u(t,x)||^2 \rho(t,x) \mathrm{d}t \mathrm{d}x,
\end{equation*}
where $\mathcal{W}$ is the set of pairs $(\rho,u)$ satisfying $\eqref{cont:1}$ and $\eqref{cont:2}$, modified as follows:
\begin{subequations}
\begin{align}
    &\frac{\partial \rho}{\partial t}(t,x)+\nabla_{x} \cdot \big[\rho(t,x)\big(A(t)x + B(t)u(t,x)\big)\big]=0, \hspace{10pt} &\forall (t,x),\\
    &\rho(0,x)=\rho_0(x), \hspace{10pt} \rho(1,y)=\rho_1(y), \hspace{10pt} &\forall (x,y).
\end{align}
\end{subequations}
\end{theorem}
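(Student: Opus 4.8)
The plan is to prove the identity by exhibiting a value-preserving correspondence between admissible feedback laws $u\in\mathcal{U}$ and pairs $(\rho,u)\in\mathcal{W}$: to each $u$ we associate the time-varying density $\rho(t,\cdot)=\operatorname{law}(\mathbf{x}^u(t))$, and we verify that the objective of $\eqref{p1}$ at $u$ equals the fluid-dynamic objective at $(\rho,u)$. Since the correspondence preserves the cost, taking the infimum over each side gives the stated equality; the work is to check that it is genuinely two-sided. Note that, unlike the plain Benamou--Brenier setting of \thref{r:1}, the control $u$ that is penalized is \emph{not} the transport velocity: the true state velocity is $v(t,x):=A(t)x+B(t)u(t,x)$, which is exactly why the continuity constraint defining $\mathcal{W}$ carries the full drift $A(t)x+B(t)u(t,x)$ while the integrand keeps $\|u\|^2$.

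For the inequality ``RHS $\le$ LHS'' I fix an admissible $u\in\mathcal{U}$ and let $S^u_t$ denote the flow map of $\dot{x}=v(t,x)$, so that $\mathbf{x}^u(t)=S^u_t(\mathbf{x}^u(0))$ and $\rho(t,\cdot)=(S^u_t)_{\#}\mu_0$. Differentiating $\int\varphi\,\mathrm{d}\rho(t,\cdot)=\mathbb{E}[\varphi(\mathbf{x}^u(t))]$ in $t$ against an arbitrary test function $\varphi$ and using $\frac{\mathrm{d}}{\mathrm{d}t}\varphi(\mathbf{x}^u(t))=\nabla\varphi(\mathbf{x}^u(t))\cdot v(t,\mathbf{x}^u(t))$ produces the weak form of $\partial_t\rho+\nabla_x\cdot(\rho v)=0$; the endpoint conditions $\rho(0,\cdot)=\rho_0$ and $\rho(1,\cdot)=\rho_1$ follow from admissibility of $u$, so $(\rho,u)\in\mathcal{W}$. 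The two objectives agree since, by Tonelli's theorem and the change-of-variables formula for push-forwards,
\begin{align*}
\mathbb{E}\left\{\int_0^1\frac{1}{2}\|u(t,\mathbf{x}^u(t))\|^2\,\mathrm{d}t\right\}
&=\int_0^1\mathbb{E}\left[\frac{1}{2}\|u(t,\mathbf{x}^u(t))\|^2\right]\mathrm{d}t\\
&=\int_0^1\!\!\int_{\mathbb{R}^n}\frac{1}{2}\|u(t,x)\|^2\,\rho(t,x)\,\mathrm{d}x\,\mathrm{d}t.
\end{align*}

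For ``LHS $\le$ RHS'' I take any $(\rho,u)\in\mathcal{W}$, sample $\mathbf{x}(0)\sim\rho_0$, and let it evolve by $\dot{\mathbf{x}}(t)=v(t,\mathbf{x}(t))$. The crucial---and, I expect, hardest---step is to show that the density of this constructed process coincides with the prescribed $\rho$. This is precisely the uniqueness issue flagged in \thref{r:1}: the continuity equation with a fixed velocity and prescribed initial datum need not determine $\rho$ uniquely in general, but under the mild regularity hypotheses mentioned there---concretely, $v$ Lipschitz in $x$ uniformly on $[0,1]$, which holds once the admissible feedbacks are Lipschitz and because $A(\cdot),B(\cdot)$ are continuous---the flow $S^u_t$ is well defined and the unique solution is $(S^u_t)_{\#}\rho_0$. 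Identifying $\rho(t,\cdot)$ with this push-forward forces $\mathbf{x}(1)\sim\rho_1=\mu_1$, so $u$ is admissible, and the same Tonelli/change-of-variables computation matches the costs, yielding the reverse inequality.

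The difficulty is therefore analytic rather than algebraic, and is concentrated entirely in the well-posedness of the flow generated by $v(t,x)=A(t)x+B(t)u(t,x)$: this single fact simultaneously supplies the existence of the density $\rho(t,\cdot)$ of $\mathbf{x}^u(t)$ in the forward direction and the uniqueness that rules out spurious pairs $(\rho,u)\in\mathcal{W}$ in the reverse direction. Making the admissible class $\mathcal{U}$ precise enough to invoke the standard transport-equation uniqueness theory (Cauchy--Lipschitz for smooth feedbacks, or DiPerna--Lions under weaker Sobolev bounds) is exactly the substance of the ``mild conditions,'' and it carries the real weight of the argument.
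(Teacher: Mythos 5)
Your proposal is correct and follows essentially the same route as the paper's proof: associate to each admissible $u$ the push-forward density of $\mu_0$ under the flow of $\dot{x}=A(t)x+B(t)u(t,x)$ to get $(\rho,u)\in\mathcal{W}$, invoke uniqueness of solutions of the continuity equation (the paper's ``mild conditions'' from \thref{r:1}) to show conversely that any $(\rho,u)\in\mathcal{W}$ forces $\rho$ to be that push-forward and hence $u\in\mathcal{U}$, and finish with the identity $\mathbb{E}\{\int_0^1\tfrac{1}{2}\|u(t,\mathbf{x}^u(t))\|^2\mathrm{d}t\}=\int_{\mathbb{R}^n}\int_0^1\tfrac{1}{2}\|u(t,x)\|^2\rho(t,x)\,\mathrm{d}t\,\mathrm{d}x$. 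Your version is in fact somewhat more explicit than the paper's about where the analytic weight lies (the weak-form verification and the Cauchy--Lipschitz/DiPerna--Lions uniqueness), which the paper leaves at the level of ``it is easily verified.''
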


\begin{proof}
First, we would like to show for every $u \in \mathcal{U}$, there exists $\rho$ such that $(\rho,u) \in \mathcal{W}$. To that end, let $u \in \mathcal{U}$ and define the vector field $V_u(\mathbf{x},t)=A(t)\mathbf{x}+B(t)u(t,\mathbf{x})$, associated with $u$. Then, taking a look at $V_u$ along the trajectories of $\eqref{eq:2}$ with $u$ as the control law, we have $V_u(\mathbf{x}(t),t)=A(t)\mathbf{x}(t)+B(t)u(t,\mathbf{x}(t))=\dot{\mathbf{x}}(t)$. This implies $V_u$ is indeed the "velocity" vector field generated by $u$ through the dynamics $\eqref{eq:2}$. Moreover, uniqueness of the solution lets us define the dynamical system $\Pi^u$ induced by $\eqref{eq:2}$, such that $\Pi^u(x,t)$ is the state at time $t$ of the system $\eqref{eq:2}$ initiated at $x \in \mathbb{R}^n$. Thus, it is easily verified that distribution $\mu_t$ of $\mathbf{x}^u(t)$, $0\leq t\leq 1$, is uniquely determined by $\mu_t={\Pi_t}_{\#}\mu_0$, or equivalently, $\mu_t(A)=\mu_0({\Pi_t}^{-1}(A))$, where $\Pi_t(x)=\Pi^u(x,t)$. Correspondingly, the density $\rho_t$ (with respect to Lebesgue measure $m$) of $\mathbf{x}^u(t)$ is uniquely defined $m-$a.e. So, taking $(t,x) \longmapsto \rho(t,x)$, with $\rho(t,\cdot)=\rho_t(\cdot)$, we observe $(\rho, V_u)$ necessarily satisfies the continuity equation $\frac{\partial \rho}{\partial t}(t,x)+\nabla_{x} \cdot (\rho V_u)(t,x)=0$, implying $(\rho,u) \in \mathcal{W}$. Next, we aim to show that any $u$ such that $(u,\rho) \in \mathcal{W}$, for some $\rho$, does belong to $\mathcal{U}$. In order to do so, first note that $V_u$, as shown earlier, is the velocity vector field  for trajectories of $\eqref{eq:2}$, and the pair $(\rho, V_u)$ satisfies the continuity equation, by hypothesis. Also, $V_u$ naturally induces a unique dynamical system $\Pi^u$, based on the solution trajectories of $\eqref{eq:2}$, which itself leads to the unique distributions $\mu_t$ of $\mathbf{x}^u(t)$, $0\leq t\leq 1$. Hence, an argument similar to that in the first part of the proof implies that there exists a unique density $\Tilde{\rho}$, $m-$a.e., such that $(\Tilde{\rho}, V_u)$ satisfies the continuity equation together with $\eqref{cont:2}$. This readily implies $\Tilde{\rho}=\rho$, $m-$a.e., and $u \in \mathcal{U}$. Finally, the fact that $\mathbb{E}\left\{ \int_{0}^1 ||u(t,\mathbf{x}^{u}(t))||^2\mathrm{d}t\right\}= \int_{\mathbb{R}^n} \int_0^1 ||u(t,x)||^2 \rho(t,x) \mathrm{d}t \mathrm{d}x$, with $\rho(t,\cdot)$ as the density of $\mathbf{x}^u(t)$, concludes the proof.
\end{proof}

\renewcommand{\thechapter}{3}

\chapter{Controlling and State Tracking of Ensembles}
In this chapter, we would like to study \textit{ensembles} of systems, agents, particles, or the like, which are, from a systems theoretic point of view, indistinguishable individually, and governed by the same dynamics. More specifically, in this setting, there exist some quantities of interest representing the states of the individual systems which evolve under the same dynamics for each system, possibly with different initial conditions, but are observable only in some aggregate form. The control community has coined the term ensemble control for the related problems on this topic \cite{Br-Kh}-\cite{Li}. Suppose, for instance, that we deal with an ensemble of particles, indexed by the set $I$, all influenced by the controlled dynamics
\begin{subequations} \label{gen_dyn}
\begin{align} 
    \dot{\mathbf{x}}^{(i)}(t)&=f(t,\mathbf{x}^{(i)}(t),u(t)),\hspace{10pt}, \mathbf{x}^{(i)}(0)=\mathbf{x}^{(i)}_0 \\
    \mathbf{y}^{(i)}(t)&=g(t,\mathbf{x}^{(i)}(t)),
\end{align}
\end{subequations}

where $\mathbf{x}^{(i)}(t) \in \mathbb{R}^n$ and $\mathbf{y}^{(i)}(t) \in \mathbb{R}^m$ are the state and the output of the $i-$th particle, $i \in I$, respectively, and $u(t) \in \mathbb{R}^p$ is the control input, at time $t\geq 0$. First of all, note that the same control input is applied to all of the systems, which indeed causes a great challenge when it comes to the steering of the ensemble to desirable states. Due to this limitation, ensemble control is sometimes referred to as broadcast (feedback) control, see e.g. \cite{Broadcast1}, \cite{Broadcast2}.  Furthermore, in this setting, neither the states nor the outputs of the particles are accessible individually. Instead, the collection $Y(t)=\{\mathbf{y}^{(i)}(t)\}_{i \in I}$ of output measurements of the whole ensemble is available at some time instances $t=0,1,2,...,T$. So, while our capabilities in controlling each particle by a separate input signal is limited, our observation of the states of the individual particles is just through the lens of aggregate behaviour of the whole ensemble. For the reasons that should be clear now, and in accordance with \cite{Ensemble Observability}, we shall call the common dynamics $\eqref{gen_dyn}$ within the ensemble under study the \textit{structural system} or the \textit{structural dynamics} of the ensemble.

Further, let us assume that the way output measurements, for each time $t \in [0,T]$, of the systems are distributed over $\mathbb{R}^m$ is characterized by some probability measure $\mu_t \in \mathcal{P}(\mathbb{R}^m)$, where $\mathcal{P}(X)$ is the set of probability measures on $X$, such that for any Borel set $A \subset \mathbb{R}^m$, $\mu_t(A)$ represents the portion of the ensemble with output measurements lying in $A$ at time $t$. Correspondingly, $\hat{\mu}_t \in \mathcal{P}(\mathbb{R}^n)$ can be understood as the way states of the systems at time $t$ are distributed  over $\mathbb{R}^n$. Now, using this characterization, we aim to interpolate the distributions $\hat{\mu}_t$, $t \in [0,T]$, based solely on the available output distributions $\{\mu_t\}_{t=0,1,...,T}$. This problem is referred to as the \textit{state tracking probelem for ensembles}, and has recently been studied in \cite{Sampled Observability} for the so-called linear discrete  ensembles.

\section{Observability of Ensembles}

The state tracking problem for ensembles is naturally entangled with the notion \textit{ensemble observability} defined in \cite{Ensemble Observability}, as the ability of uniquely recovering $\hat{\mu}_0$ given $\{\mu_t\}_{t\geq 0}$ and $\{u(t)\}_{t\geq 0}$, as compared with the classic notion of observability for a linear dynamical system \cite{Kalman1960, Kalman1963}, which is the ability of uniquely recovering the deterministic initial state $x(0)$ from the knowledge of $\{y(t)\}_{t\geq 0}$ and $\{u(t)\}_{t\geq 0}$.

\begin{definition}[Ensemble Observability] \thlabel{def_ens_obs}
The ensemble \eqref{gen_dyn} is said to be observable if
\begin{equation} \label{ens_obs_def}
    \hat{\mu}_0^{(1)} \neq \hat{\mu}_0^{(2)} \Longrightarrow \exists t \geq 0: \hspace{10pt} \mu_t^{(1)} \neq \mu_t^{(2)}
\end{equation}
for every $\hat{\mu}_0^{(1)},\hat{\mu}_0^{(2)} \in \mathcal{P}(\mathbb{R}^{n})$, the space of all probability measures on $\mathbb{R}^n$.
\end{definition}

In fact, by posing certain inverse problems in biology, \cite{Ensemble Observability} turns out to have provided the first exposition of the connections between the systems theoretical concept of observability and mathematical tomography. In the same study, motivated by the connections with mathematical tomography, an algebro-geometric characterization of ensemble observability for linear systems has been derived, which we will recover in the next section by another approach.

Let us now consider an ensemble with linear time-varying structural dynamics
\begin{subequations} \label{lin_sys1}
\begin{align} 
    \dot{\mathbf{x}}^{(i)}(t)&=A(t)\mathbf{x}^{(i)}(t)+B(t)u(t) \notag \\
    \mathbf{y}^{(i)}(t)&=C(t)\mathbf{x}^{(i)}(t).
\end{align}
\end{subequations}

Now, motivated by a similar argument in the classical observability of linear systems, we are curious to know if the the piece of information $\{u(t)\}_{t\geq 0}$ is ever material to the notion of ensemble observability. Note that $\mathbf{y}^{(i)}(t)=C(t)\Phi(t,0)\mathbf{x}^{(i)}(0)+C(t)\int_0^{t}\Phi(t,\tau)B(\tau)u(\tau)\mathrm{d}\tau$, with $\Phi(t_2,t_1)$ the state transition matrix from $t_1$ to $t_2$ associated with $A(\cdot)$, and let $\bar{Y}(t):=\{\bar{\mathbf{y}}^{(i)}(t)\}_{i \in I}$, where $\bar{\mathbf{y}}^{(i)}(t)=\mathbf{y}^{(i)}(t)-C(t)\int_0^{t}\Phi(t,\tau)B(\tau)u(\tau)\mathrm{d}\tau$ denotes the uncontrolled (zero-input) response of the system $i$ at time $t$. Further, assume that $\bar{Y}(t)$ is characterized by  some probability measure $\bar{\mu}_t \in \mathcal{P}(\mathbb{R}^m)$, in a similar fashion to $Y(t)$, denoted by $\bar{Y}(t) \longleftrightarrow \bar{\mu}_t$. Obviously, if the controlled ensemble \eqref{lin_sys1} is observable in the sense of $\eqref{ens_obs_def}$, then the uncontrolled ensemble is also observable in the same sense, i.e.
\begin{equation} \label{weak_def}
    \hat{\mu}_0^{(1)} \neq \hat{\mu}_0^{(2)} \Longrightarrow \exists t \geq 0: \hspace{10pt} \bar{\mu}_t^{(1)} \neq \bar{\mu}_t^{(2)},
\end{equation}
since the uncontrolled case is a special case of the controlled case with $u\equiv 0$. On the other hand, we want to show if simply the uncontrolled ensemble is observable, (namely, the seemingly weaker condition \eqref{weak_def} holds) then so is the controlled ensemble. To this end, take two different initial state distributions $ \hat{\mu}_0^{(1)} \neq \hat{\mu}_0^{(2)}$ for the ensemble. Then, \eqref{weak_def} implies $\bar{Y}^{(1)}(t) \neq \bar{Y}^{(2)}(t)$, for some $t \geq 0$, as $ \bar{Y}^{(n)}(t) \longleftrightarrow \bar{\mu}^{(n)}_t$, $n=1,2$. This readily concludes $\mu_t^{(1)} \neq \mu_t^{(2)}$, since that $Y^{(n)}(t) \longleftrightarrow \mu_t^{(n)}$, and that $Y^{(n)}(t)=\bar{Y}^{(n)}(t)+C(t)\int_0^{t}\Phi(t,\tau)B(\tau)u(\tau)\mathrm{d}\tau$, $n=1,2$. 

What the argument above reveals is that observability of a linear time-varying ensemble does not have to do with the information $\{u(t)\}_{t\geq 0}$. More clearly, of course in order to obtain the possible initial state distribution(s) $\hat{\mu}_0$ of the ensemble \eqref{lin_sys1} the knowledge of $\{u(t)\}_{t\geq 0}$ is required along with $\{\mu_t\}_{t \geq 0}$, uniqueness of $\hat{\mu}_0$ is dependent solely on the uncontrolled structural dynamics characterized by $(A(\cdot),C(\cdot))$. Hence, 

Given the above remark, it will also be shown in the following section that there is a close connection (indeed, equivalence) between ensemble observability and the classical observability of the structural system in the special case of linear discrete ensembles, as defined in the sequel.

\subsection{Observability of LTI Discrete Ensembles}
Suppose the ensemble under study consists of a countable (finite or infinite) set of continuous-time LTI systems indexed by $i \in \mathbb{N}$, with identical state and output dynamics, and (possibly) different initial states
\begin{align} \label{sys1}
    \dot{\mathbf{x}}^{(i)}(t)=&A\mathbf{x}^{(i)}(t), \hspace{10pt} \mathbf{x}^{(i)}(0)=\mathbf{x}^{(i)}_0\\
    \mathbf{y}^{(i)}(t)=&C\mathbf{x}^{(i)}(t), \notag
\end{align}

where $A \in \mathbb{R}^{n \times n}$ and $C \in \mathbb{R}^{p \times n}$. Now, if the initial state distribution of the systems within ensemble is characterized by some discrete probability measure $\mu_0$, then clearly the output $\mathbf{y}(t)$ will be distributed according to the push-forward $L_{\#}\mu_0$ of the measure $\mu_0$ under linear mapping $L: \mathbb{R}^n \rightarrow \mathbb{R}^p$, with $L:x \mapsto Ce^{At}x$. In other words, $L_{\#}\mu_0(\mathcal{V}):=\mu_0(L^{-1}(\mathcal{V}))=\mu_0\left((Ce^{At})^{-1}(\mathcal{V})\right)$, for every $\mathcal{V} \in \mathcal{B}(\mathbb{R}^p)$, with $\mathcal{B}(\mathbb{R}^p)$ the Borel sigma-algebra on $\mathbb{R}^p$, and $L^{-1}(\mathcal{V})$ the inverse image of $\mathcal{V}$ under mapping $L$. When there is no fear of ambiguity, and for better reference to the time, we may use the notation $(Ce^{At})_{\#} \mu_0$ instead of $L_{\#} \mu_0$, for output distribution at time $t$. Thus, the property \thref{def_ens_obs} requires is that given all output distributions $\{(Ce^{At})_{\#} \mu_0: t\geq 0\}$ we would be able to construct the initial distribution $\mu_0$ uniquely. Returning to the goal of this section, we are supposed to answer the question on the relationship between observability of the ensemble \eqref{sys1} and the observability of the pair $(A,C)$ associated with the ensemble structural dynamics. And in fact, the ensuing analysis will show that for the more general case of LTI ensembles with arbitrary state and output distributions, i.e. ensembles not necessarily consisting of countable systems, the former condition is stronger than the latter.

To make our claim mathematically rigorous, let us first be reminded that the unobservable subspace $\mathcal{UO}(A,C)$ of an LTI system $(A,C)$ is defined as 
\begin{equation}
   \mathcal{UO}(A,C):= \bigcap_{t \geq 0} \ker Ce^{At},
\end{equation}
and observability of $(A,C)$ is equivalent to $\mathcal{UO}(A,C)=\{0\}$ (\cite{Hespanha}). Now, let us assume that $(A,C)$ is not observable which implies, by the above characterization, that there exist $\nu \neq 0$ such that $Ce^{At}\nu=0$, for all $t \geq 0$. Then given two initial state distributions $\mu_0^{(1)}, \mu_0^{(2)} \in \mathcal{P}_d(\mathbb{R}^{n})$, with $\mu_0^{(2)}(\mathcal{W})=\mu_0^{(1)}(\mathcal{W}+\nu)$, $\mathcal{W} \in \mathcal{B}(\mathbb{R}^n)$, we will have 
\begin{align}
    (Ce^{At})_{\#} \mu_0^{(2)}(\mathcal{V})=&\mu_0^{(2)}\left((Ce^{At})^{-1}(\mathcal{V})\right) \notag\\
    =&\mu_0^{(2)}\left((Ce^{At})^{-1}(\mathcal{V})-\nu \right) \label{shift}\\
    =&\mu_0^{(1)}\left((Ce^{At})^{-1}(\mathcal{V}) \right) \notag\\
    =& (Ce^{At})_{\#} \mu_0^{(1)}(\mathcal{V}) \notag,
\end{align}
for any $\mathcal{V} \in \mathcal{B}(\mathbb{R}^n)$ and $t \geq 0$, in which \eqref{shift} is due to the fact that $(Ce^{At})^{-1}(\mathcal{V})$ is closed under addition with scalar multiples of $\nu$, i.e., $(Ce^{At})^{-1}(\mathcal{V})=(Ce^{At})^{-1}(\mathcal{V})+c\nu$, $c \in \mathbb{R}$. This negates ensemble observability of $(A,C)$. We will state the result as a theorem for future references.

\begin{theorem}[Necessary condition for ensemble observability] \thlabel{necessary}
If an LTI ensemble characterized by structural dynamics $(A,C)$ is observable, then the pair $(A,C)$ is observable.
\end{theorem}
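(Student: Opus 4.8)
The plan is to argue by contraposition: I will show that if the structural pair $(A,C)$ fails to be observable, then the ensemble cannot be observable in the sense of \thref{def_ens_obs}. This is exactly the direction prepared in the discussion leading up to the statement, so the proof mainly consists of organizing that reasoning into a clean contradiction.

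First I would invoke the classical characterization that $(A,C)$ is observable if and only if its unobservable subspace $\mathcal{UO}(A,C)=\bigcap_{t\geq 0}\ker Ce^{At}$ is trivial. Assuming $(A,C)$ is \emph{not} observable therefore produces a nonzero vector $\nu\in\mathcal{UO}(A,C)$, i.e. $Ce^{At}\nu=0$ for every $t\geq 0$. The conceptual crux is that such a $\nu$ turns into an output-undetectable translation symmetry: for any $\mathcal{V}\in\mathcal{B}(\mathbb{R}^p)$ and any $c\in\mathbb{R}$ one has $(Ce^{At})^{-1}(\mathcal{V})=(Ce^{At})^{-1}(\mathcal{V})+c\nu$, since $Ce^{At}(x+c\nu)=Ce^{At}x$ for all $x$.

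Next I would exhibit two genuinely distinct initial distributions that the output cannot separate. Taking $\mu_0^{(1)}\in\mathcal{P}(\mathbb{R}^n)$ and defining the $\nu$-shifted measure $\mu_0^{(2)}$ by $\mu_0^{(2)}(\mathcal{W})=\mu_0^{(1)}(\mathcal{W}+\nu)$ for $\mathcal{W}\in\mathcal{B}(\mathbb{R}^n)$ gives $\mu_0^{(1)}\neq\mu_0^{(2)}$ as long as $\mu_0^{(1)}$ is not invariant under translation by $\nu$; the simplest admissible choice is a Dirac mass, e.g. $\mu_0^{(1)}=\delta_0$ and $\mu_0^{(2)}=\delta_{-\nu}$, which are distinct precisely because $\nu\neq 0$. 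Using the translation invariance above, I would then compute, for every $\mathcal{V}$ and every $t\geq 0$, the chain $(Ce^{At})_{\#}\mu_0^{(2)}(\mathcal{V})=\mu_0^{(2)}\big((Ce^{At})^{-1}(\mathcal{V})\big)=\mu_0^{(1)}\big((Ce^{At})^{-1}(\mathcal{V})\big)=(Ce^{At})_{\#}\mu_0^{(1)}(\mathcal{V})$, so the two ensembles generate identical output distributions at all times.

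This contradicts \thref{def_ens_obs}, which requires distinct initial distributions to produce distinct output distributions for some $t\geq 0$, and thereby establishes the contrapositive. The argument is elementary, and the only point requiring genuine care is the translation-invariance step together with the verification that $\mu_0^{(1)}$ and $\mu_0^{(2)}$ really are distinct elements of $\mathcal{P}(\mathbb{R}^n)$ rather than two descriptions of the same measure, which the Dirac-mass choice settles at once. I would also note that the same construction lives entirely within the discrete class $\mathcal{P}_d(\mathbb{R}^n)$, matching the linear discrete ensemble setting of this section, so the necessary condition is not an artifact of admitting exotic initial distributions.
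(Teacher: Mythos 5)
Your proof is correct and follows essentially the same route as the paper: contraposition via a nonzero vector $\nu$ in the unobservable subspace, the $\nu$-shifted measure $\mu_0^{(2)}(\mathcal{W})=\mu_0^{(1)}(\mathcal{W}+\nu)$, and the invariance $(Ce^{At})^{-1}(\mathcal{V})=(Ce^{At})^{-1}(\mathcal{V})+c\nu$ yielding identical push-forward output distributions for all $t\geq 0$. Your one genuine addition is welcome: the paper never explicitly checks that $\mu_0^{(1)}\neq\mu_0^{(2)}$ (a shift-invariant measure would break the contradiction), and your Dirac-mass choice $\delta_0$, $\delta_{-\nu}$ closes that small gap while also placing the counterexample inside $\mathcal{P}_d(\mathbb{R}^n)$.
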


Note that \thref{necessary} is a result which holds for any LTI ensemble, including ones with discrete, absolutely continuous, or singular state distributions. However, the rest of this paper will focus on finding a sufficient condition for observability of LTI \textit{discrete} ensembles. To this end, we start with presenting some steps which will finally lead to the intended result, \thref{equivalent}. 

\begin{proposition} \thlabel{prop2}
An LTI discrete ensemble characterized by structural dynamics $(A,C)$ is observable if for any two initial state distributions $\mu_0^{(1)}, \mu_0^{(2)} \in \mathcal{P}_d(\mathbb{R}^{n})$, and every $x\in \mathbb{R}^n $, there exists $t\geq 0$ such that
\begin{align} \label{eq:40}
    \mu_0^{(1)}(\{x\}) < \mu_0^{(2)}(\{x\}) \Longrightarrow  (Ce^{At})_{\#} \mu_0^{(1)}\left(\{Ce^{At}x\}\right)=\mu_0^{(1)}(x).
\end{align}
\end{proposition}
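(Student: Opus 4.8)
The plan is to prove the stated implication directly: assuming the hypothesis \eqref{eq:40}, I verify the definition of ensemble observability \thref{def_ens_obs}, which in the discrete LTI setting (where the output distribution at time $t$ is $\mu_t = (Ce^{At})_{\#}\mu_0$) reads $\mu_0^{(1)} \neq \mu_0^{(2)} \Rightarrow \exists\, t\geq 0:\ (Ce^{At})_{\#}\mu_0^{(1)} \neq (Ce^{At})_{\#}\mu_0^{(2)}$. So I fix two distinct discrete initial distributions $\mu_0^{(1)}\neq\mu_0^{(2)}$ and aim to produce a single time $t$ at which their output push-forwards disagree at one explicitly identified point.

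First I would locate a witnessing atom. Since $\mu_0^{(1)}$ and $\mu_0^{(2)}$ are discrete and unequal, they must differ at some point, i.e.\ there is $x\in\mathbb{R}^n$ with $\mu_0^{(1)}(\{x\})\neq\mu_0^{(2)}(\{x\})$. Relabelling the two measures if necessary, I may assume $\mu_0^{(1)}(\{x\}) < \mu_0^{(2)}(\{x\})$; this is legitimate because \eqref{eq:40} is assumed for every ordered pair of discrete measures, so swapping the roles of $\mu_0^{(1)}$ and $\mu_0^{(2)}$ costs nothing. Now the hypothesis \eqref{eq:40} applies to this $x$ and yields a time $t\geq 0$ with $(Ce^{At})_{\#}\mu_0^{(1)}(\{Ce^{At}x\}) = \mu_0^{(1)}(\{x\})$; I set $y:=Ce^{At}x$. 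The content of this equality is that $x$ is, up to $\mu_0^{(1)}$-null atoms, the only atom of $\mu_0^{(1)}$ that the map $Ce^{At}$ sends to $y$.

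The separating estimate is then immediate from writing the push-forward of a discrete measure at a point as the total mass of its preimage: $(Ce^{At})_{\#}\mu_0^{(2)}(\{y\}) = \mu_0^{(2)}\big(\{z: Ce^{At}z = y\}\big) \geq \mu_0^{(2)}(\{x\})$, since $x$ itself belongs to that preimage. Chaining this with the hypothesis gives
\begin{equation*}
 (Ce^{At})_{\#}\mu_0^{(2)}(\{y\}) \;\geq\; \mu_0^{(2)}(\{x\}) \;>\; \mu_0^{(1)}(\{x\}) \;=\; (Ce^{At})_{\#}\mu_0^{(1)}(\{y\}),
\end{equation*}
so the two output distributions assign different masses to $\{y\}$ at time $t$, whence $(Ce^{At})_{\#}\mu_0^{(1)} \neq (Ce^{At})_{\#}\mu_0^{(2)}$. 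This is exactly the conclusion required by \thref{def_ens_obs}, so the ensemble is observable.

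I do not expect a serious obstacle; the argument is structurally short. The one point requiring care is the correct reading of the hypothesis: the equality $(Ce^{At})_{\#}\mu_0^{(1)}(\{y\})=\mu_0^{(1)}(\{x\})$ must be understood as saying the atom at $x$ is \emph{resolved} by the observation map at time $t$ (no other atom of $\mu_0^{(1)}$ collapses onto $y$), and it is precisely this one-atom resolution that survives the \emph{one-sided} mass comparison against $\mu_0^{(2)}$, whose preimage mass can only be larger. I would also state explicitly that no measurability issue arises, since for discrete measures the point-mass evaluations and the preimage sum are all countable and well defined.
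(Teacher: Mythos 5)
Your proof is correct and is essentially the paper's argument run in the direct rather than contrapositive direction: the paper assumes non-observability and derives the contradiction $\mu_0^{(1)}(\{x\}) < \mu_0^{(2)}(\{x\}) \leq (Ce^{AT})_{\#}\mu_0^{(2)}(\{Ce^{AT}x\}) = (Ce^{AT})_{\#}\mu_0^{(1)}(\{Ce^{AT}x\}) = \mu_0^{(1)}(\{x\})$, which is exactly your chain of a witnessing atom, the preimage lower bound, and the resolved-atom equality. The only cosmetic difference is your explicit relabelling step to secure the strict inequality $\mu_0^{(1)}(\{x\}) < \mu_0^{(2)}(\{x\})$, which the paper leaves implicit.
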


\begin{proof}
Suppose the ensemble is not observable. Then by definition there exist initial state distributions $\mu_0^{(1)} \neq \mu_0^{(2)} \in \mathcal{P}_d(\mathbb{R}^{n})$ such that they are not distinguishable in the output at all, in the sense that $(Ce^{At})_{\#} \mu_0^{(1)} = (Ce^{At})_{\#} \mu_0^{(2)}$, for all $t \geq 0$. Then, if statement (3) holds, for every $x\in \mathbb{R}^n$ with $\mu_0^{(1)}(x) < \mu_0^{(2)}(x)$ we may correspondingly choose $T \geq 0$ so that $(Ce^{AT})_{\#} \mu_0^{(1)}\left(\{Ce^{AT}x\}\right)=\mu_0^{(1)}(\{x\})$. Then we will have
\begin{align}
    \mu_0^{(1)}(\{x\}) &< \mu_0^{(2)}(\{x\}) \\
                   &\leq (Ce^{AT})_{\#} \mu_0^{(2)}\left(\{Ce^{AT}x\}\right)\\
                   &=(Ce^{AT})_{\#} \mu_0^{(1)}\left(\{Ce^{AT}x\}\right)\\
                   &=\mu_0^{(1)}(\{x\}),
\end{align}
where inequality (5) comes from the fact that $L_{\#}\mu(\mathcal{V})=\mu(L^{-1}(\mathcal{V}))$, by definition. The derivation above shows a contradiction. Hence, in the case the ensemble $(A,C)$ is not observable, implication (3) is not valid either, and the proof is complete.
\end{proof}

\begin{remark}
Notice that the right hand side expression in the implication $\eqref{eq:40}$ does not involve $\mu_0^{(2)}$. Also,
\begin{align}
    (Ce^{At})_{\#} \mu_0^{(1)}\left(Ce^{At}x\right)=\mu_0^{(1)}(\{x\}) &\Longleftrightarrow \mu_0^{(1)}\left((Ce^{At})^{-1}(Ce^{At}x) \setminus \{x\}\right)=0 \\
    &\Longleftrightarrow \mu_0^{(1)}\left((x+\ker Ce^{At})\setminus \{x\}\right)=0, \label{eq:9}
\end{align}
in which the last expression does not involve $\mu_0^{(2)}$. So, the apparently stronger condition 
\begin{equation*}
     \forall \mu_0^{(1)} \in \mathcal{P}_d(\mathbb{R}^n), \hspace{5pt} \forall x \in \mathbb{R}^n, \hspace{5pt} \exists t \geq 0: \hspace{10pt} \mu_0^{(1)}\left((x+\ker Ce^{At})\setminus \{x\}\right)=0
\end{equation*}
is indeed equivalent to the condition $\eqref{eq:40}$, as it is vacuous for the case $\mu_0^{(1)}(\{x\})=1$, and if $\mu_0^{(1)}(\{x\})<1$, there surely exists a measure $\mu^{(2)}_0 \in \mathcal{P}_d(\mathbb{R}^n)$ such that $\mu_0^{(1)}(\{x\})<\mu_0^{(2)}(\{x\})$. So one may readily verify that the sufficient condition provided in \thref{prop2} can be simplified in the following way:
\end{remark} 

\begin{theorem}
An ensemble characterized by structural dynamics $(A,C)$ is observable if for every measure $\mu_0 \in \mathcal{P}_d(\mathbb{R}^n)$ and $x \in \mathbb{R}^n$, there exists $t \geq 0$ such that
\begin{equation} \label{eq:10}
    \mu_0\left((x+\ker Ce^{At}) \setminus \{x\} \right)=0.
\end{equation}
\end{theorem}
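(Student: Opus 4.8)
The plan is to deduce this theorem directly from \thref{prop2}, by showing that the hypothesis \eqref{eq:10} is simply an unconditional (hence a priori stronger, but in fact equivalent) form of the sufficient condition \eqref{eq:40}. Since \thref{prop2} already guarantees observability whenever \eqref{eq:40} holds, it suffices to verify that \eqref{eq:10} forces the implication in \eqref{eq:40} to be true for every pair of discrete initial distributions and every point $x$.

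First I would record the elementary identity linking the consequent of \eqref{eq:40} to the coset condition in \eqref{eq:10}. For fixed $t$, the preimage $(Ce^{At})^{-1}(\{Ce^{At}x\})$ is precisely the affine subspace $x+\ker Ce^{At}$, because $Ce^{At}z=Ce^{At}x$ is equivalent to $z-x\in\ker Ce^{At}$. Combining this with the push-forward definition $L_{\#}\mu_0(\mathcal{V})=\mu_0(L^{-1}(\mathcal{V}))$, I obtain the chain
\begin{align*}
    (Ce^{At})_{\#}\mu_0(\{Ce^{At}x\})=\mu_0(\{x\})
    &\Longleftrightarrow \mu_0\big((Ce^{At})^{-1}(\{Ce^{At}x\})\setminus\{x\}\big)=0\\
    &\Longleftrightarrow \mu_0\big((x+\ker Ce^{At})\setminus\{x\}\big)=0,
\end{align*}
which is exactly the equivalence sketched in the preceding remark. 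This shows that the consequent appearing in \eqref{eq:40} (with $\mu_0^{(1)}=\mu_0$) coincides with the defining condition \eqref{eq:10}.

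With this identification in hand, the conclusion follows at once. Given any $\mu_0^{(1)},\mu_0^{(2)}\in\mathcal{P}_d(\mathbb{R}^n)$ and any $x$, the hypothesis \eqref{eq:10} applied to $\mu_0^{(1)}$ furnishes a time $t\geq 0$ at which $\mu_0^{(1)}\big((x+\ker Ce^{At})\setminus\{x\}\big)=0$, i.e.\ the consequent of \eqref{eq:40} holds at that $t$. An implication whose conclusion is true is automatically true, so \eqref{eq:40} is satisfied regardless of whether the antecedent $\mu_0^{(1)}(\{x\})<\mu_0^{(2)}(\{x\})$ holds. Thus the sufficient condition of \thref{prop2} is met, and the ensemble is observable.

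I do not anticipate a serious obstacle: the argument is largely bookkeeping layered on top of \thref{prop2}. The one point deserving genuine care is the logical passage between the conditional statement \eqref{eq:40} and the unconditional \eqref{eq:10}. One direction is the trivial ``true consequent'' argument above; to see that nothing is lost, one notes that \eqref{eq:10} is vacuous when $\mu_0(\{x\})=1$, and that whenever $\mu_0(\{x\})<1$ one can always construct a competing discrete measure $\mu_0^{(2)}$ placing strictly greater mass at $x$, so that dropping the antecedent does not weaken the requirement. Confirming that this passage is a genuine equivalence rather than a one-sided implication is the only step at which I would slow down.
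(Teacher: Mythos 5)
Your proposal is correct and takes essentially the same route as the paper: the paper's own justification for this theorem is the remark immediately preceding it, which records the same identity $(Ce^{At})^{-1}(\{Ce^{At}x\})=x+\ker Ce^{At}$ together with the push-forward computation, argues the equivalence of the conditional condition \eqref{eq:40} with the unconditional \eqref{eq:10}, and then invokes \thref{prop2}. Your explicit remark that only the trivial ``true consequent'' direction of that equivalence is actually needed to conclude observability is a slight sharpening of the paper's presentation, not a different argument.
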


\begin{lemma} \thlabel{l:1}
For $\mu \in \mathcal{P}_d(\mathbb{R}^n)$, and $\mu-$measurable sets $\mathcal{A}$ and $\mathcal{B}$, we have
\begin{equation*}
    \mu(\mathcal{A} \setminus \mathcal{B})=0 \Longleftrightarrow \mathcal{A} \cap \mathcal{U} \subset \mathcal{B},
\end{equation*}
in which $\mathcal{U}$ is the set containing atoms of $\mu$.
\end{lemma}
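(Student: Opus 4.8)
The plan is to reduce both implications to a single structural fact: a discrete measure assigns full mass to its set of atoms, i.e.\ $\mu(\mathbb{R}^n \setminus \mathcal{U}) = 0$. First I would establish this fact directly from the definition of $\mathcal{P}_d(\mathbb{R}^n)$. By hypothesis there is a countable set $A$ with $\mu(A) = \mu(\mathbb{R}^n) = 1$, so countable additivity gives $\sum_{a \in A} \mu(\{a\}) = 1$. Discarding those points of $A$ carrying zero mass leaves a countable collection of points each of strictly positive mass, and every such point lies in $\mathcal{U}$ by the definition of an atom; hence $\mu(\mathcal{U}) \geq 1$, so $\mu(\mathcal{U}) = 1$. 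In particular $\mathcal{U}$ is contained in a countable set and is therefore Borel, and $\mathbb{R}^n \setminus \mathcal{U}$ is $\mu$-null.

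For the forward implication, assuming $\mu(\mathcal{A} \setminus \mathcal{B}) = 0$, I would argue by contradiction. If some $x \in \mathcal{A} \cap \mathcal{U}$ failed to lie in $\mathcal{B}$, then $\{x\} \subset \mathcal{A} \setminus \mathcal{B}$, and since $x$ is an atom, monotonicity would give $\mu(\mathcal{A} \setminus \mathcal{B}) \geq \mu(\{x\}) > 0$, contradicting the hypothesis. Thus every $x \in \mathcal{A} \cap \mathcal{U}$ lies in $\mathcal{B}$, which is exactly $\mathcal{A} \cap \mathcal{U} \subset \mathcal{B}$.

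For the converse, assuming $\mathcal{A} \cap \mathcal{U} \subset \mathcal{B}$, I would observe that $\mathcal{A} \setminus \mathcal{B}$ can contain no atom: any $x \in (\mathcal{A} \setminus \mathcal{B}) \cap \mathcal{U}$ would lie in $\mathcal{A} \cap \mathcal{U} \subset \mathcal{B}$, contradicting $x \notin \mathcal{B}$. Hence $\mathcal{A} \setminus \mathcal{B} \subset \mathbb{R}^n \setminus \mathcal{U}$, and monotonicity together with the null set established above yields $\mu(\mathcal{A} \setminus \mathcal{B}) \leq \mu(\mathbb{R}^n \setminus \mathcal{U}) = 0$.

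The only real content, and the step I would treat most carefully, is the preliminary claim that $\mu$ is concentrated on $\mathcal{U}$; once that is in hand, both directions follow immediately from monotonicity and the defining property of atoms. The single subtlety worth recording explicitly is the measurability of $\mathcal{U}$, which is automatic here because for a discrete measure the atoms form a countable, hence Borel, set.
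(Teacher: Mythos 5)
Your proof is correct and follows essentially the same route as the paper's: both arguments rest on the two facts that a discrete measure is concentrated on its (countable, hence Borel) set of atoms $\mathcal{U}$ and that every atom carries strictly positive mass, so that a measurable subset of $\mathcal{U}$ is $\mu$-null iff it is empty. The paper compresses this into a single chain of equivalences, replacing $\mathcal{A}\setminus\mathcal{B}$ by $(\mathcal{A}\cap\mathcal{U})\setminus\mathcal{B}$ and noting that the latter, being a set of atoms, is null iff empty, while you unpack the same content into two directions and make the preliminary claim $\mu(\mathcal{U})=1$ explicit --- a purely presentational difference.
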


\begin{proof}
\begin{align}
    \mu(\mathcal{A} \setminus \mathcal{B})=0 \Longleftrightarrow \mu\left((\mathcal{A} \cap \mathcal{U}) \setminus \mathcal{B}\right)=0 \xLeftrightarrow{((\mathcal{A} \cap \mathcal{U}) \setminus \mathcal{B}) \subset \mathcal{U}} &(\mathcal{A} \cap \mathcal{U}) \setminus \mathcal{B}=\varnothing \Longleftrightarrow \mathcal{A} \cap \mathcal{U} \subset \mathcal{B}.
\end{align}
\end{proof}

\begin{proposition} \thlabel{prop:5}
For every $\mu \in \mathcal{P}_d(\mathbb{R}^n)$ and $x \in \mathbb{R}^n$, there exists $t \geq 0$, such that $\mu\left((x+\ker Ce^{At}) \setminus \{x\} \right)=0$,
\underline{if and only if} for every countable set $\mathcal{I} \subset \mathbb{R}^n$, there exists $t \geq 0$ such that $\ker Ce^{At} \bigcap \mathcal{I} \subset \{0\}$.
\end{proposition}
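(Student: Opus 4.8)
The plan is to invoke \thref{l:1} to rewrite the left-hand (measure-theoretic) condition as a statement about the atoms of $\mu$, and then to pass back and forth between ``atom sets of discrete measures'' and ``arbitrary countable sets,'' which are the same objects. This reduces the claimed equivalence to a transparent bookkeeping argument.

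First I would apply \thref{l:1} with $\mathcal{A}=x+\ker Ce^{At}$ (a closed affine subspace, hence Borel and $\mu$-measurable) and $\mathcal{B}=\{x\}$. Writing $\mathcal{U}$ for the atom set of $\mu$, the lemma yields
\begin{equation*}
    \mu\big((x+\ker Ce^{At})\setminus\{x\}\big)=0 \Longleftrightarrow (x+\ker Ce^{At})\cap \mathcal{U} \subset \{x\}.
\end{equation*}
The inclusion on the right says that the only atom of $\mu$ lying in the affine subspace $x+\ker Ce^{At}$ is $x$ itself; substituting $v=u-x$ this reads $\ker Ce^{At}\cap(\mathcal{U}-x)\subset\{0\}$, where $\mathcal{U}-x:=\{u-x:u\in\mathcal{U}\}$ is again countable. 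Hence the left-hand condition of the proposition is equivalent to the assertion that for every $\mu\in\mathcal{P}_d(\mathbb{R}^n)$ with atom set $\mathcal{U}$ and every $x\in\mathbb{R}^n$ there exists $t\geq 0$ with $\ker Ce^{At}\cap(\mathcal{U}-x)\subset\{0\}$.

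For the forward implication I would take an arbitrary countable $\mathcal{I}\subset\mathbb{R}^n$, realize it as an atom set by choosing a discrete measure $\mu=\sum_i w_i\delta_{x_i}$ with $w_i>0$, $\sum_i w_i=1$ and $\{x_i\}=\mathcal{I}$, and then apply the reformulated hypothesis with the choice $x=0$; this produces $t\geq 0$ with $\ker Ce^{At}\cap\mathcal{I}\subset\{0\}$, which is precisely the right-hand condition. For the converse, given any $\mu\in\mathcal{P}_d(\mathbb{R}^n)$ with atom set $\mathcal{U}$ and any $x$, I would set $\mathcal{I}=\mathcal{U}-x$ (countable) and feed it to the right-hand hypothesis, obtaining $t\geq 0$ with $\ker Ce^{At}\cap(\mathcal{U}-x)\subset\{0\}$; by the reformulation above this is exactly the desired $\mu$-null statement.

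Once \thref{l:1} is in place the argument is essentially routine, so the only points needing care are the correspondence between countable sets and atom sets---every atom set of a probability measure is countable because the atoms carry positive mass summing to at most one, and conversely any countable set can be prescribed as the atom set of a discrete measure---together with the affine-to-linear reduction effected by subtracting $x$. The inclusion ``$\subset\{0\}$'' (rather than ``$=\varnothing$'') is precisely what is required, since $0\in\ker Ce^{At}$ for every $t$; this is the small subtlety I would flag as the main thing to get right.
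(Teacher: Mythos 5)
Your proof is correct and follows essentially the same route as the paper's: both directions hinge on \thref{l:1}, with a positively-weighted discrete measure (the paper takes $\sum_{n} 2^{-n}\delta_{x_n}$) realizing an arbitrary countable set as an atom set for the forward direction, and the translated atom set $\mathcal{U}-x$ fed to the hypothesis for the converse. The only difference is presentational: you carry out the affine-to-linear reduction $(x+\ker Ce^{At})\cap\mathcal{U}\subset\{x\} \Longleftrightarrow \ker Ce^{At}\cap(\mathcal{U}-x)\subset\{0\}$ explicitly at the outset, which the paper leaves implicit.
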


\begin{proof}
($\Longrightarrow$) For each countable set $\mathcal{I}=\{x_1,x_2,...\} \subset \mathbb{R}^n$, we may associate a discrete measure $\mu$, for instance  $\Sigma_{n=1}^{\infty} \frac{1}{2^n} \delta_{x_n}$, whose atoms are members of $\mathcal{I}$. So, by hypothesis, $\mu\left(\ker Ce^{At}\setminus \{0\}\right)=0$, for some $t \geq 0$, which implies $\ker Ce^{At} \bigcap \mathcal{I} \subset \{0\}$, by \thref{l:1}.\\
($\Longleftarrow$) Fix $\mu \in \mathcal{P}_d(\mathbb{R}^n)$ and $x \in \mathbb{R}^n$, and let $\mathcal{U}=\{a_1,a_2,...\} \subset \mathbb{R}^n$ denote the set of atoms of $\mu$. Then, by hypothesis, we may find $t \geq 0$, such that $\ker Ce^{At} \bigcap \mathcal{V} \subset \{0\}$, where $\mathcal{V}:=\mathcal{U}-x=\{a_i-x| i=1,2,...\}$ is the translation of $\mathcal{U}$ by $-x$. Thus, again using \thref{l:1}, $\mu\left((x+\ker Ce^{At})\setminus \{x\}\right)=0$.
\end{proof}

Notice that, what \thref{prop:5} is providing is a condition equivalent to the condition $\eqref{eq:10}$, which itself is a sufficient condition for ensemble observability. So, now we have all the components to provide our first result on a sufficient condition for ensemble observability which resembles a similar result in \cite{Sampled Observability} for the case of \textit{finite} ensembles.

\begin{theorem}\thlabel{th:5}
A continuous-time, LTI, discrete ensemble characterized by structural dynamics $(A,C)$ is observable if for every countable set $\mathcal{I} \subset \mathbb{R}^n$, there exists $t \geq 0$ such that
\begin{equation} \label{eq:11}
    \ker Ce^{At} \bigcap \mathcal{I} \subset \{0\}.
\end{equation}
\end{theorem}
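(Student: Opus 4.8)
The plan is to obtain this theorem as an immediate corollary of the chain of equivalences and sufficient conditions already assembled above, so that the argument reduces to invoking the earlier results in the correct order rather than performing any fresh estimate.

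First I would observe that the hypothesis \eqref{eq:11} of the theorem---that for every countable set $\mathcal{I} \subset \mathbb{R}^n$ there exists $t \geq 0$ with $\ker Ce^{At} \cap \mathcal{I} \subset \{0\}$---is precisely the right-hand condition appearing in \thref{prop:5}. Applying the ($\Longleftarrow$) direction of \thref{prop:5}, I would then conclude that for every $\mu_0 \in \mathcal{P}_d(\mathbb{R}^n)$ and every $x \in \mathbb{R}^n$ there exists $t \geq 0$ such that $\mu_0\big((x+\ker Ce^{At})\setminus\{x\}\big)=0$.

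Next I would recognize this last statement as verbatim the hypothesis \eqref{eq:10} of the preceding (unnamed) theorem, which already establishes that such a condition suffices for ensemble observability. Invoking that theorem then immediately closes the argument, yielding observability of the ensemble \eqref{sys1}.

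The only delicate point is the alignment of quantifiers: the present hypothesis is stated uniformly over all countable sets $\mathcal{I}$, whereas the sufficient condition \eqref{eq:10} is stated over all pairs $(\mu_0,x)$, and the passage between the two is exactly what \thref{prop:5} supplies---its ($\Longleftarrow$) half effecting this translation by taking, for a given $(\mu_0,x)$, the countable set of atoms of $\mu_0$ shifted by $-x$. Since no genuinely new inequality or construction is required, I anticipate no substantive obstacle; the proof is essentially a two-line citation of \thref{prop:5} followed by the theorem containing \eqref{eq:10}.
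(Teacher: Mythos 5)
Your proposal is correct and follows exactly the paper's own route: the paper likewise obtains \thref{th:5} by combining the ($\Longleftarrow$) direction of \thref{prop:5} (translating the countable-set condition \eqref{eq:11} into the condition \eqref{eq:10} via the atoms of $\mu_0$ shifted by $-x$) with the preceding unnamed theorem asserting that \eqref{eq:10} suffices for ensemble observability. Your remark on the quantifier alignment is precisely the content \thref{prop:5} supplies, so no gap remains.
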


Given the result above, note that Proposition 5 in \cite{Sampled Observability} interestingly proves useful in the development of our results such that it guarantees that observability of $(A,C)$ is indeed a sufficient condition for $\eqref{eq:11}$. 

\begin{proposition}[Proposition 5 in \cite{Sampled Observability}] \thlabel{pr:6}
Let $(A,C)$ be an observable system. Then for every countable family $\mathcal{I}=\{x_1,x_2,...\} \subset \mathbb{R}^n$ of non-zero vectors, there exists $t \geq 0$ such that 
\begin{equation} \label{eq:12}
  \ker Ce^{At} \bigcap \mathcal{I}=\varnothing.
\end{equation}
\end{proposition}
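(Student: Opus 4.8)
The plan is to exploit the real-analyticity of the map $t \mapsto Ce^{At}$ together with the characterization of observability through the unobservable subspace recalled in the excerpt. First I would fix an arbitrary nonzero vector $x \in \mathcal{I}$ and consider the vector-valued function $g_x(t) := Ce^{At}x$. Since the entries of $e^{At}$ are entire functions of $t$, each component of $g_x$ is real-analytic, and therefore so is the nonnegative scalar function $\phi_x(t) := \|Ce^{At}x\|^2$. The crucial observation is that $\phi_x$ cannot vanish identically: if it did, then $Ce^{At}x = 0$ for all $t \geq 0$, placing $x$ in $\mathcal{UO}(A,C) = \bigcap_{t\geq 0}\ker Ce^{At}$, which equals $\{0\}$ by the assumed observability of $(A,C)$, contradicting $x \neq 0$.

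Next I would invoke the fact that the zeros of a real-analytic function of one variable that is not identically zero are isolated. Consequently the set $Z_x := \{t \geq 0 : Ce^{At}x = 0\}$, being the zero set of $\phi_x$, is a discrete (hence at most countable) subset of $[0,\infty)$, and in particular has Lebesgue measure zero.

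Finally, writing $\mathcal{I} = \{x_1, x_2, \ldots\}$ and forming the set $Z := \bigcup_{k} Z_{x_k}$ of all \emph{bad} times, I would note that $Z$ is a countable union of measure-zero sets, hence itself of measure zero. Therefore $[0,\infty) \setminus Z$ has full measure and is certainly nonempty; picking any $t$ in it guarantees $Ce^{At}x_k \neq 0$ for every $k$, i.e., $x_k \notin \ker Ce^{At}$ for all $k$, which is exactly the assertion $\ker Ce^{At} \cap \mathcal{I} = \varnothing$.

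I expect the main (indeed only genuine) obstacle to be the analyticity step: one must justify carefully that $t \mapsto Ce^{At}x$ is real-analytic and that its nonvanishing, guaranteed by observability, forces its zero set to be small. An equivalent and perhaps cleaner route for the final combination would be the Baire category theorem: each $Z_{x_k}$ is closed with empty interior, so $Z$ is meager and its complement in the complete metric space $[0,\infty)$ is nonempty. Either measure-theoretic or topological smallness suffices to conclude.
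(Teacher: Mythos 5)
Your proof is correct. The paper itself offers no proof of this proposition---it is imported directly from the cited reference---and your argument (real-analyticity of $t \mapsto Ce^{At}x$, non-vanishing forced by observability via $\bigcap_{t\geq 0}\ker Ce^{At}=\{0\}$, isolated zeros of a nonzero analytic function, and the failure of a countable union of null sets to cover $[0,\infty)$) is exactly the standard argument used there, with your Baire-category variant an equally valid way to finish.
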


Since \eqref{eq:11} and \eqref{eq:12} are clearly equivalent, what \thref{th:5} and \thref{pr:6} together with the fact that observability of $(A,C)$ is necessary for its ensemble observability (\thref{necessary}), imply is 
stated in the following theorem:

\begin{theorem} \thlabel{equivalent}
A continuous-time, LTI, discrete ensemble characterized by structural dynamics $(A,C)$ is observable in the sense of \eqref{ens_obs_def}, if and only if $(A,C)$ is observable.
\end{theorem}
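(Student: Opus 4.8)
The plan is to prove the two implications of the biconditional separately, leveraging the chain of results already assembled. Both directions turn out to be immediate consequences of earlier statements, so the task is really one of bookkeeping rather than fresh argument.

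For the ``only if'' direction---ensemble observability implies observability of $(A,C)$---I would simply invoke \thref{necessary}, which was established for arbitrary LTI ensembles (discrete or otherwise) by exhibiting, whenever $\mathcal{UO}(A,C) \neq \{0\}$, a pair of distinct shifted initial distributions producing identical output distributions at all times. No further work is needed here.

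For the ``if'' direction, I would assume $(A,C)$ is observable and trace the implications forward. By \thref{pr:6}, for every countable family $\mathcal{I}$ of non-zero vectors there exists $t \geq 0$ with $\ker Ce^{At} \cap \mathcal{I} = \varnothing$, i.e.\ condition \eqref{eq:12}. The next step is to pass from \eqref{eq:12} to \eqref{eq:11}: given an arbitrary countable set $\mathcal{I} \subset \mathbb{R}^n$, apply \eqref{eq:12} to $\mathcal{I} \setminus \{0\}$ to obtain $t$ with $\ker Ce^{At} \cap (\mathcal{I} \setminus \{0\}) = \varnothing$, whence $\ker Ce^{At} \cap \mathcal{I} \subset \{0\}$. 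The converse passage is equally short, since $0 \in \ker Ce^{At}$ always but $0 \notin \mathcal{I}$ for a family of non-zero vectors, so ``$\subset \{0\}$'' forces ``$= \varnothing$'' there; the two conditions are thus genuinely equivalent. Having \eqref{eq:11} for every countable set, \thref{th:5} then yields ensemble observability, closing the loop.

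The only point requiring care is this equivalence of \eqref{eq:11} and \eqref{eq:12}, and even that rests on the elementary observation that $\ker Ce^{At}$, being a linear subspace, always contains the origin; the asymmetry between the two formulations is resolved entirely by whether or not $0$ is admitted into the test family. I expect no genuine analytic obstacle---the substantive content was already discharged in \thref{necessary}, \thref{th:5}, and \thref{pr:6}---so the main risk is merely threading the quantifiers (``for every countable set, there exists $t$'') correctly through each citation and making sure the discreteness hypothesis $\mu_0 \in \mathcal{P}_d(\mathbb{R}^n)$ is in force wherever \thref{th:5} is applied.
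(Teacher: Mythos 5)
Your proposal is correct and follows exactly the route the paper takes: the paper deduces \thref{equivalent} by combining \thref{necessary} (the ``only if'' direction) with \thref{pr:6} and \thref{th:5} (the ``if'' direction), noting that \eqref{eq:11} and \eqref{eq:12} are equivalent. Your only addition is to spell out that equivalence---handling the point $0$ explicitly---which the paper dismisses as ``clearly equivalent,'' so there is no substantive difference.
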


\section{State Tracking of Linear Ensembles}

In studying the state tracking problem for linear ensemble $\eqref{lin_sys1}$, as described earlier in this chapter, we are supposed to obtain the state distributions $\{\hat{\mu_t}\}_{0\leq t \leq T}$ not only for the time instances $k=0,1,...,T$ at which the output distributions $\mu_k$ are observed, but for any $t \in [0,T]$. Among all possible $\hat{\mu}_t$, we are attempting to find those which are optimal with respect to some cost function in terms of the control input $u$ to the ensemble. Indeed, this attempt is motivated by the fact that in reality many examples of ensembles are observed to be evolving in some optimal fashion. So, if the cost function upon which $\{\hat{\mu_t}\}_{0\leq t \leq T}$ are optimal is quadratic in the control input $u$, then our setting for the state tracking problem fits within the framework of the stochastic control problem $\eqref{p1}$ in chapter 2. In this regard, we find it necessary to assume controllability of ensemble dynamics $(A(\cdot),B(\cdot))$ over the intervals $[k,k+1]$, $k=0,1,...,T-1$. Then, the task of obtaining the ensemble state distributions $\hat{\mu}_t$ translates into finding the distribution of the stochastic process $\mathbf{x}^u(t)$, $t \in [0, T]$, resulting from the following optimization
\begin{subequations} \label{eq:1000}
\begin{align}
    &\inf_{u \in \mathcal{U}} \mathbb{E}\left\{ \int_{0}^T \frac{1}{2}||u(t,\mathbf{x}^{u}(t))||^2\mathrm{d}t\right\} \\
    &\dot{\mathbf{x}}^u(t)=A(t)\mathbf{x}^u (t) + B(t)u(t,\mathbf{x}^{u}(t)) \\
    &\mathbf{y}(k)=C(k)\mathbf{x}^{u}(k)\sim \label{eq:4}\mu_k, \hspace{10pt} k=0,1,2,...,T, 
\end{align}
\end{subequations}
where $\mathcal{U}$, without loss of generality, is the set of feedback control laws $u: (t,x) \longmapsto u(t,x)$ satisfying $\eqref{eq:4}$. Next, following \cite{StateTracking} and motivated by the discussions leading to \thref{Th:1} in Chapter 2, we will have:

\begin{align}
    &\inf_{u \in \mathcal{U}} \mathbb{E}\left\{ \int_{0}^T \frac{1}{2}||u(t,\mathbf{x}^{u}(t))||^2\mathrm{d}t\right\}  \label{eq:5}\\
    &\hspace{20 pt} \dot{\mathbf{x}}^u(t)=A(t)\mathbf{x}^u (t) + B(t)u(t,\mathbf{x}^{u}(t)) \notag\\
    & \hspace{20 pt} \mathbf{y}(k)=C(k)\mathbf{x}^{u}(k)\sim \mu_k, \hspace{10pt} k=0,1,2,...,T, \notag\\
    =&\inf_{u \in \mathcal{U}}\sum_{k=0}^{T-1} \mathbb{E}\left\{ \int_{k}^{k+1} \frac{1}{2}||u(t,\mathbf{x}^{u}(t))||^2\mathrm{d}t\right\} \\
    &\hspace{20 pt} \dot{\mathbf{x}}^u(t)=A(t)\mathbf{x}^u (t) + B(t)u(t,\mathbf{x}^{u}(t)) \notag\\
    & \hspace{20 pt} \mathbf{y}(k)=C(k)\mathbf{x}^{u}(k)\sim \mu_k, \hspace{10pt} k=0,1,2,...,T,\notag \\
    =&\inf_{(u_0,u_1,...,u_{T-1}) \in \prod_{k=0}^{T-1}  \mathcal{U}_k}\sum_{k=0}^{T-1} \mathbb{E}\left\{ \int_{k}^{k+1} \frac{1}{2}||u_k(t,\mathbf{x}^{u}(t))||^2\mathrm{d}t\right\} \label{eq:6}\\
    &\hspace{20 pt} \dot{\mathbf{x}}^u(t)=A(t)\mathbf{x}^u (t) + B(t)u(t,\mathbf{x}^{u}(t)) \notag\\
    & \hspace{20 pt} \mathbf{y}(k)=C(k)\mathbf{x}^{u}(k)\sim \mu_k, \hspace{10pt} k=0,1,2,...,T,\notag \\
    = &\sum_{k=0}^{T-1} \inf_{u \in \mathcal{U}_k} \mathbb{E}\left\{ \int_{k}^{k+1} \frac{1}{2}||u(t,\mathbf{x}^{u}(t))||^2\mathrm{d}t\right\} \label{eq:7}\\
    &\hspace{20 pt} \dot{\mathbf{x}}^u(t)=A(t)\mathbf{x}^u (t) + B(t)u(t,\mathbf{x}^{u}(t)) \notag\\
    & \hspace{20 pt} \mathbf{y}(k)=C(k)\mathbf{x}^{u}(k)\sim \mu_k, \hspace{10pt} \mathbf{y}(k+1)=C(k+1)\mathbf{x}^{u}(k+1)\sim \mu_{k+1}, \notag\\
    =&\sum_{k=0}^{T-1} \inf_{T_{\#}\hat{\mu}_k=\hat{\mu}_{k+1}} \int_{\mathbb{R}^n} c_k\big(x,T(x)\big)\mathrm{d}\hat{\mu}_k(x)  \label{eq:8}\\
    &\hspace{20 pt} [C(k)]_{\#}\hat{\mu}_k=\mu_k, \hspace{10 pt}  [C(k+1)]_{\#}\hat{\mu}_{k+1}=\mu_{k+1}, \notag
\end{align}
where $c_k(x,y)=\frac{1}{2}\big(y-\Phi_{(k+1,k)} x\big)^{\top}W_{(k,k+1)}^{-1}\big(y-\Phi_{(k+1,k)} x\big)$, the set $\mathcal{U}_k$ in $\eqref{eq:6}$ is the set of restrictions $u\big|_{[k,k+1]\times \mathbb{R}^n}$ of the control laws $u \in \mathcal{U}$, and the equality $\eqref{eq:7}$ is by the fact that there is a one-to-one correspondence between $\mathcal{U}$ and $\prod_{k=0}^{T-1} \mathcal{U}_k$.

What the final expression $\eqref{eq:8}$ implies is that the state tracking problem with $T$ intermediate output distributions can be split into $T$ independent Monge optimal transport problems, each constrained with two "push-forward"-type equations. Now, it remains to provide a numerical algorithm to solve each of the optimal transport problems seeking $\hat{\mu}_k$, $k=0,1,...,T$, and then find the interpolation $\{\hat{\mu}_t\}_{k\leq t \leq k+1}$, out of $\hat{\mu}_k$ and $\hat{\mu}_k$ for each $k=0,1,2,...T-1$. Considering the equivalent Kantorovich representation of $\eqref{eq:8}$, one can easily observe that, any of those optimal transport problems, say the $k$-th, can be cast into a linear programming problem, using an ordering $\{z^{1},z^{2},...,z^{d^n}\}$ of the finite discretization $\{x^1,x^2,...,x^d\}^n \subset \mathbb{R}^n$ as the state space, in the following form:
\begin{subequations}
\begin{align}
    &\min_{\pi_k \in \mathbb{R}^{d^n \times d^n}} \sum_{i=1}^{d^n} \sum_{j=1}^{d^n} c^{(k)}_{ij}\pi_k(i,j), \\
    & \pi_k \mathbf{1}_d=\hat{m}_k,\\
    & \pi_k^{\top} \mathbf{1}_d=\hat{m}_{k+1},\\
    & \hat{m}_k(i)=\mu_k\left(C(k)z^i\right), \hspace{5pt} 1\leq i \leq d^n,
\end{align}
\end{subequations}
where $c^{(k)}_{ij}=c_k(z^{i},z^{j})=\frac{1}{2}\big(z^{j}-\Phi_{(k+1,k)} z^{i}\big)^{\top}W_{(k,k+1)}^{-1}\big(z^{j}-\Phi_{(k+1,k)} z^{i}\big)$, and $\mathbf{1}_d$ is the $d$-dimensional vector of all ones. It is worth noting that the discretization set $\{x^1,x^2,...,x^d\}^n$ in the $k-$th optimization problem should be chosen such that it conveys as much mass of the measure $\hat{\mu}_k$ as possible, and that at the same time, its dimension $d$ is not too large that the associated problem becomes computationally infeasible. Specifically, as an index of computational complexity, this numerical algorithm for the solution of the given state tracking problem requires solving for at least $Td^{(2n)}$ variables; a number which grows exponentially in the dimension of the structural dynamics, and linearly in the number of output distributions.

\subsection{State Tracking with Gaussian Marginals}

We observed earlier in this chapter that the state tracking problem \eqref{eq:1000} for linear ensembles could be transformed into a set of optimal transport problems seeking the marginals $\hat{\mu}_k$ as seen in \eqref{eq:8}. In this section we would like to study the special case that the ensemble output distributions $\mu_k$, observed at time instances $k=0,1,...,T$, are Gaussian $\mathcal{N}(\nu_k, \Sigma_k)$. First note that in a linear system with observable $(A(\cdot),C(\cdot))$, the initial state is uniquely determined by
\begin{equation} \label{eq:500}
\mathbf{x}(t_0)=M_{(t_0,t_1)}^{-1} \int_{t_0}^{t_1} \Phi_{(t,t_0)}^{\top} C(t)^{\top}\mathbf{y}(t)\mathrm{d}t
\end{equation}
where $M_{(t_0,t_1)}=\int_{t_0}^{t_1} \Phi_{(t,t_0)}^{\top} C(t)^{\top}C(t)\Phi_{(t,t_0)}\mathrm{d}t$ is the observability Gramian from $t_0$ to $t_1$. This implies that the initial states $\mathbf{x}(t_0)$ giving rise to a Gaussian process $\mathbf{y}(t)$ has to be Gaussian, since it is a linear function of the process $\mathbf{y}(t)$ (see e.g. \cite[Section 3.6]{Astrom}). In our setting of ensemble state tracking, the observed outputs distributions $\mu_k$ are Gaussian at each $k=0,1,2...,T$. But the property that the \textit{process} $\mathbf{y}(t)$, $0 \leq t \leq t$, be Gaussian, i.e., \textit{all} finite dimensional distributions of $\mathbf{y}(t)$ be multivariate Gaussian, is neither assumed, and nor required. In fact, in viewing \eqref{eq:500}, given the continuity of $C$ and the state transition matrix $\Phi$, what is required to conclude $\mathbf{x}(t_0)$ is Gaussian is that $\mathbf{y}(t)$ would have continuous sample paths, which is for free in linear systems with continuous coefficients. Now, the future states distributions are determined by $\hat{\mu}_t=(L_t)_{\#}\hat{\mu}_0$, where $L_t(\mathbf{x})=\Phi_{(t,0)}\mathbf{x}+\int_{0}^{t} \Phi_{(t,\tau)}B(\tau)u(\tau)\mathrm{d}\tau$ is a linear mapping. This readily implies that $\mu_t$, $0\leq t \leq T$ are Gaussian.

Letting $\hat{\mu}_t \longleftrightarrow (\hat{\nu}_t, \hat{\Sigma}_t)$, \cite{StateTracking} specifically proposes a two-stage method to initially find the means $\hat{\nu}_k$, $k=0,1,...,T$, and then the covariances $\hat{\Sigma}_k$ of the states by solving a semidefinite program (SDP). Subsequently, for each $k$, the set of interpolant pairs $(\hat{\nu}_t, \hat{\Sigma}_t)$, $k\leq t \leq k+1$, are obtained along with the solution of optimal transport between the distributions $\hat{\mu}_k \longleftrightarrow (\hat{\nu}_k, \hat{\Sigma}_k)$ and $\hat{\mu}_{k+1} \longleftrightarrow (\hat{\nu}_{k+1}, \hat{\Sigma}_{k+1})$, as described in \thref{Gaussian_remark} of chapter 2.
\begin{example}
Consider a linear ensemble with 
\begin{equation*}
    A= \begin{bmatrix} 0 &1\\0 & 0 \end{bmatrix}, \hspace{10pt} B= \begin{bmatrix} 0\\1 \end{bmatrix}, \hspace{10pt} C=\begin{bmatrix} 1 & 0 \end{bmatrix}
\end{equation*}
We will illustrate the solution of the ensemble state tracking problem for two cases using the aforementioned method: \\
\underline{case 1}:  scalar Gaussian outputs at $t=0,1,...,4$ with means and variances $[-1,3,5,-4,-7]$, and $[3, 3, 3, 3, 4]$.\\
\underline{case 2}:  scalar Gaussian outputs at $t=0,1,...,4$ with means and variances $[-1,3,5,-4,-7]$, and $[6, 7, 3, 2, 5]$.
\begin{figure}[H]
   \centering
   \includegraphics[width=15cm]{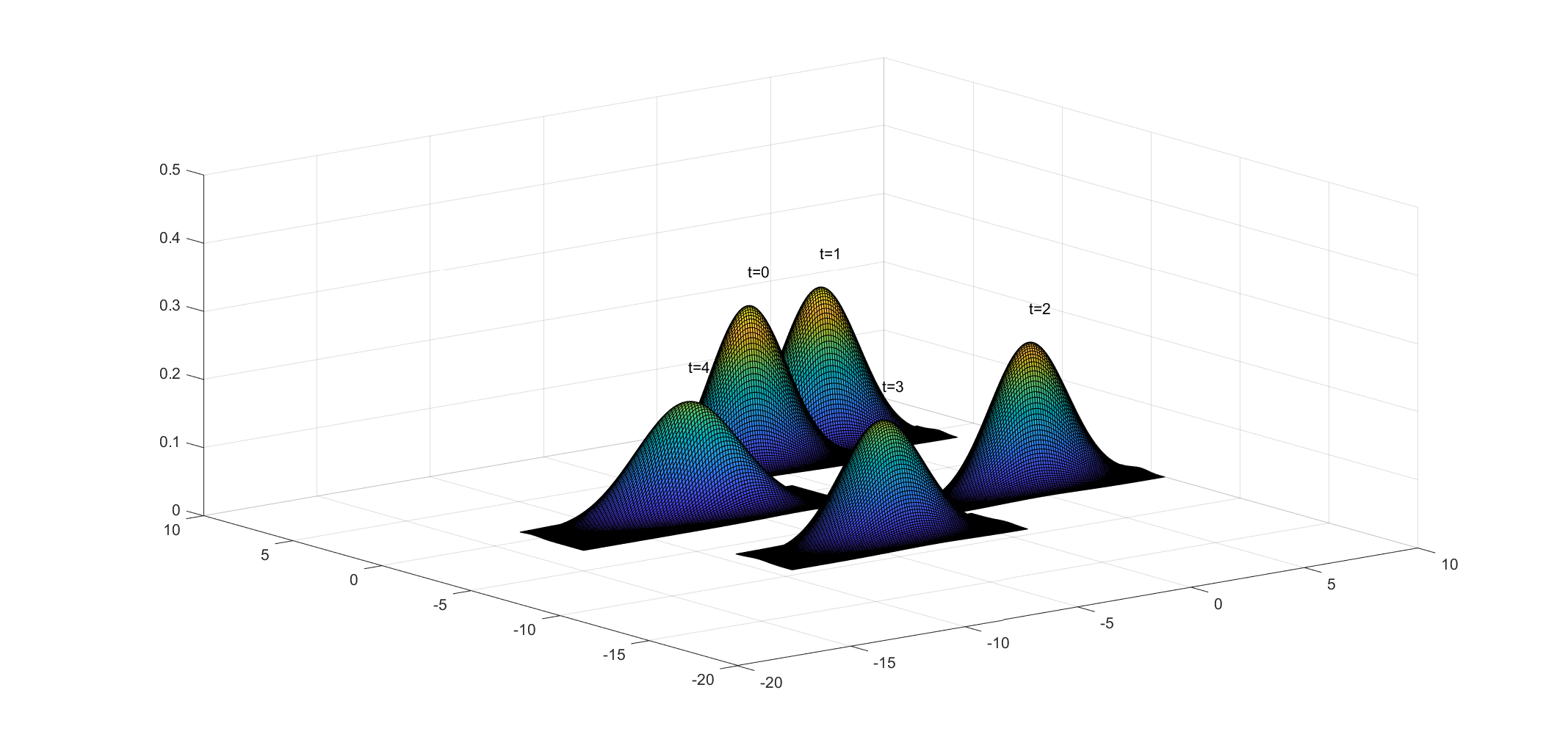}    \caption{ Tracking of state distributions $\hat{\mu}_t$, in case 1.}
\end{figure}
\begin{figure}[H]
    \centering
   \includegraphics[width=15cm]{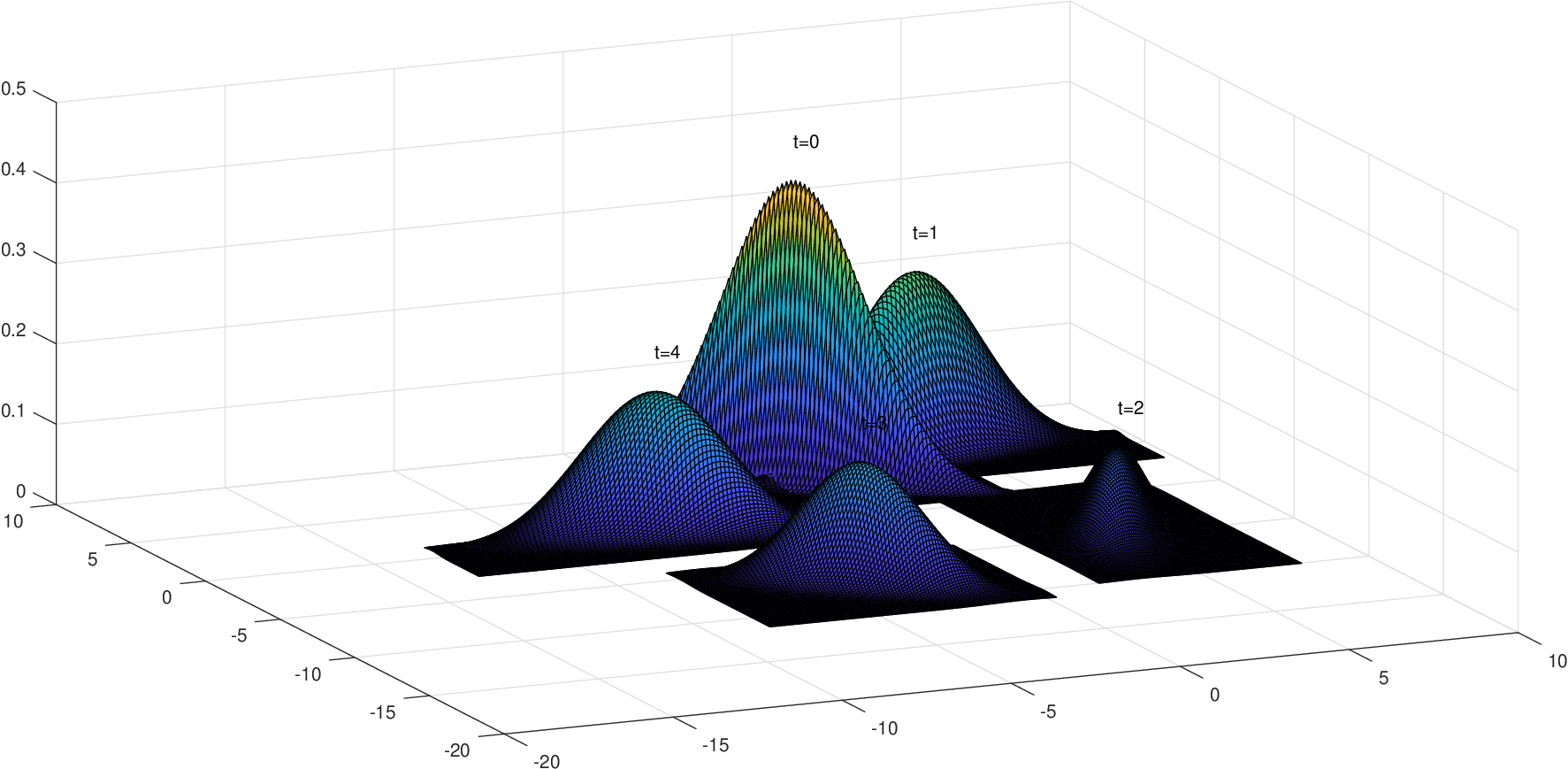}    \caption{ Tracking of state distributions $\hat{\mu}_t$, in case 2.}
    \label{fig:1}
\end{figure}

\end{example}

\clearpage

\section{Controllability of Ensembles}

Let us again consider an ensemble of systems with nonlinear dynamics

\begin{align} \label{gen_dyn2}
    \dot{\mathbf{x}}^{(i)}(t)&=f(t,\mathbf{x}^{(i)}(t),u(t)),\hspace{10pt}, \mathbf{x}^{(i)}(0)=\mathbf{x}^{(i)}_0 
\end{align}
indexed by $i \in I$, and note that, as in the previous sections, the state distributions at each time $t$ are characterized by a probability measure, denoted by $\mu_t \longleftrightarrow \{\mathbf{x}^{(i)}(t)\}_{i \in I}$. We start this section by providing some well-known definitions of ensemble controllability in the literature (see e.g. \cite{Approx&Exact, Li, xudong}) in order to illustrate width of the spectrum of ideas. Then, adopting one of those definitions as a reference, we will propose a continuous measure of controllabillity as compared with that reference binary controllability. 

\begin{definition}[Exact Controllability] \thlabel{def_exact}
The ensemble \eqref{gen_dyn2} is said to be exactly controllable from $\mu_0$ to $\mu_1$ on $[0,T]$, $T>0$, if for each $\epsilon>0$ there exists a (piecewise) continuous control input $u$ such that the displacement interpolation $\mu(\cdot)$ of the ensemble satisfies $\mu(T)=\mu_1$.
\end{definition}

\begin{definition}[Approximate Controllability]
The ensemble \eqref{gen_dyn2} is said to be approximately controllable from $\mu_0$ to $\mu_1$ on $[0,T]$, $T>0$, if for each $\epsilon>0$ there exists a (piecewise) continuous control input $u$ such that the displacement interpolation $\mu(\cdot)$ of the ensemble satisfies
\begin{equation}
    W_p(\mu(T),\mu_1) < \epsilon.
\end{equation}
where $W_p$ is the $p$-Wasserstein distance.
\end{definition}

Another commonly used definition for controllability of ensemble is the so called $\mathbb{L}_p$-controllability which will come next. This definition, in fact, assumes a setting for studying controllability of ensembles a little different from that for other definitions of ensemble controllability and even for the observability of ensembles as defined in this paper. And that is, in this setting, of course the control inputs are uniformly influencing all systems (i.e., broadcast control), at each time $t$ the same state distributions with different configurations with respect to the index set $I$ are distinguished. In other words, by controllability of the ensemble in the following sense, we require that not only a final distribution $\mu_1 \longleftrightarrow \{\mathbf{x}^{(i)}(T)\}_{i \in I}$ be reachable from an initial distribution $\mu_0 \longleftrightarrow \{\mathbf{x}^{(i)}(0)\}_{i \in I}$, but also a function $\mathbf{x}(T):i \longmapsto \mathbf{x}^{(i)}(T)$ be reached from another function $\mathbf{x}(0):i \longmapsto \mathbf{x}^{(i)}(0)$ representing the configurations of the target and initial states, respectively. For each $t \geq 0$, a function $\mathbf{x}(t): I \rightarrow \mathbb{R}^n$ is called the \textit{ensemble profile} at time $t$.

\begin{definition}[$\mathbb{L}_p$-controllability]
The ensemble \eqref{gen_dyn2} is said to be $\mathbb{L}_p$-controllable, $1\leq p \leq \infty$, if for any initial and target profiles $\mathbf{x}_0,\mathbf{x}_d \in \mathbb{L}_p(I, \mathbb{R}^n)$, and $\epsilon>0$, there exists a finite $T>0$ and a (piecewise) continuous control input $u$ such that the profile $\mathbf{x}(T)$ of the ensemble at time $T$ satisfies
\begin{equation}
    \left\Vert \mathbf{x}_d^{(\theta)}-\mathbf{x}^{(\theta)}(T)\right \Vert_{L_p}<\epsilon,
\end{equation}
where for each $1 \leq p <\infty$ and $x \in \mathbb{L}_p(I, \mathbb{R}^n)$ we have
\begin{equation}
    \left\Vert x \right\Vert_{\mathbb{L}_p}=\left(\int_{\theta \in I} \left\Vert x^{(\theta)}\right\Vert ^p d\theta\right)^{1/p}
\end{equation}
and for $p=\infty$, $||x||_{\mathbb{L}_{\infty}}=\text{ess\,sup}_{\theta} \left\Vert x^{(\theta)}\right\Vert$.
\end{definition}
\subsection{A Controllability Measure for Nonlinear Ensembles}

In this section, we shall consider an ensemble of particles indexed by $i \in I$, under the nonlinear dynamics
\begin{equation} \label{eq:34}
\dot{\mathbf{x}}^{(i)}(t)=f\Big(\mathbf{x}^{(i)}(t),\mathds{1}_D\left(\mathbf{x}^{(i)}(t)\right) u\left(t,\mathbf{x}^{(i)}(t)\right)\Big), \hspace{10pt} \mathbf{x}^{(i)}(0)=\mathbf{x}^{(i)}_0,
\end{equation}
over $0\leq t \leq T$, where $D \subset \mathbb{R}^n$ is a nonempty open connected set, $\mathds{1}_D$ is the indicator function of $D$, and $f: \mathbb{R}^n \times \mathbb{R}^m \rightarrow \mathbb{R}^n$ is continuous. 
Motivated by crowds and flocks models (e.g. \cite{CuckerSmale}, \cite{Control to flock}, \cite{pedestrian}) we may think of this dynamics as the dynamics of an ensemble whose states (agents positions, for instance) are influenced by the velocity vector field \begin{equation}
    V(x)=f(x,0),
\end{equation}
perturbed by the control input $(t,x) \mapsto u(t,x)$ which is influencing only a fixed portion $D$ of the space (see Figure \ref{fig:2}). Thus, if $\rho(t,x)$ is the time-varying density, with respect to the Lebesgue measure, of spatial distribution of the ensemble states, it is the (weak) solution of the corresponding continuity equation
\begin{equation} \label{eq:35}
    \frac{\partial \rho}{\partial t}(t,x)+\nabla_{x} \cdot \big[\rho(t,x)f\big(x,\mathds{1}_D(x) u(t,x)\big)\big]=0, \hspace{10pt} \forall (t,x),
\end{equation}
with $\rho(0,\cdot)=\rho_0$, where $\rho_0$ represents the initial distribution of the states. For more detailed discussions on the continuity equation, we may refer to \thref{r:1} in chapter 2, and the references therein. Note that, in the related literature, the transport-type equation \eqref{eq:35} is usually considered as the (macroscopic) alternative for the (microscopic) representation \eqref{eq:34} of the ensemble. So, in this chapter, we may interchangeably use either representations as the ensemble dynamics.\\

\begin{figure}[htp]
    \centering
    \includegraphics[width=10cm]{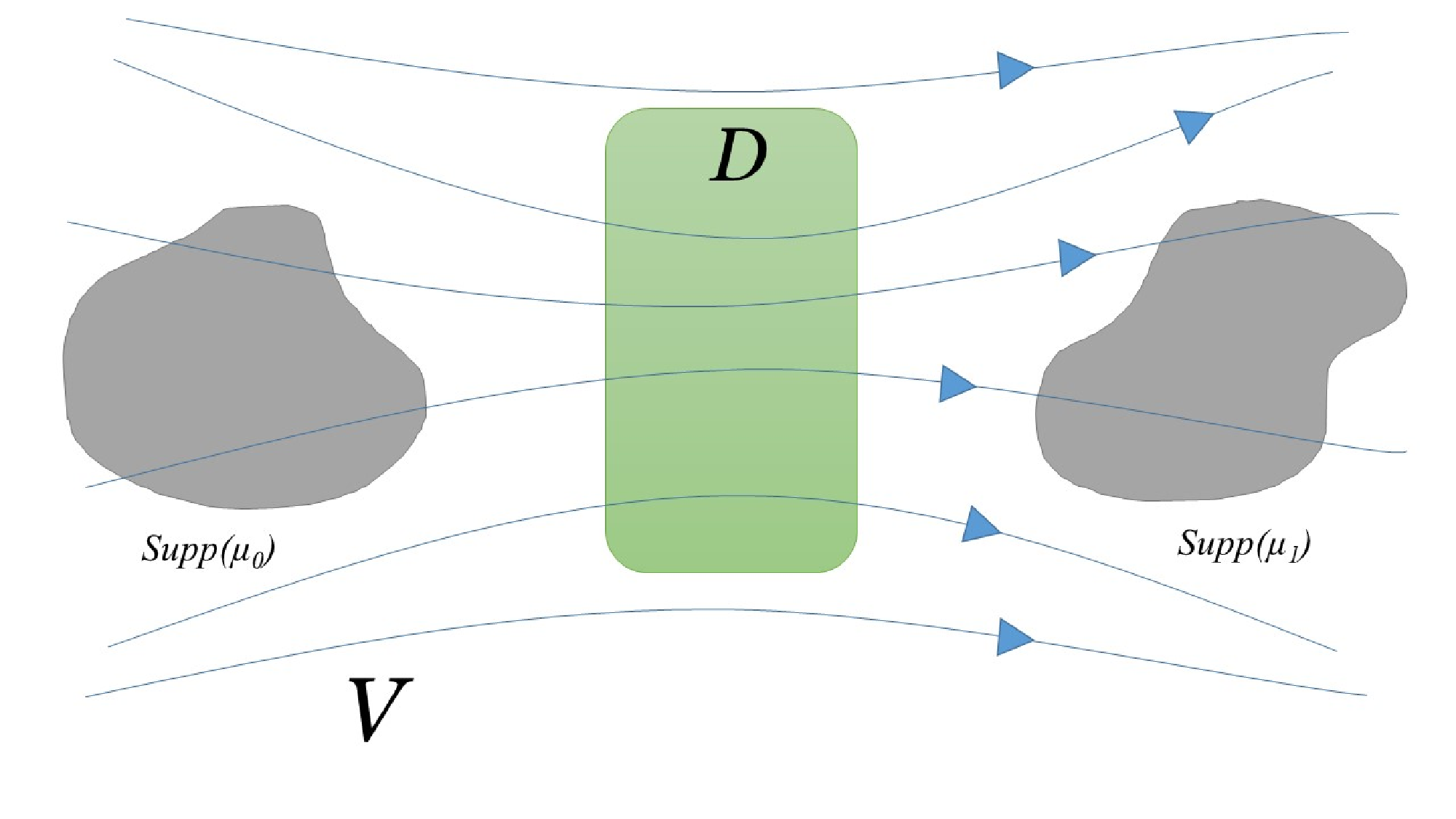}
    \caption{ Geometry of the region $D$, supports of $\mu_0$ and $\mu_1$, and the field $V$.}
    \label{fig:2}
\end{figure}
Our goal in this section is to define a controllability measure which is invariant under linear transformations of the ensemble states, specially due to the physical units scalings. In this regard, consider a finite ensemble of $N$ agents with initial states $\mathbf{x}_0=\{\mathbf{x}_0^1,\mathbf{x}_0^2,...,\mathbf{x}_0^N\}$, and target states $\mathbf{x}_1=\{\mathbf{x}_1^1,\mathbf{x}_1^2,...,\mathbf{x}_1^N\}$. Correspondingly, we may attribute the distributions
\begin{equation} \label{empirical}
    \mu_0=\frac{1}{N}\sum_{i=1}^N \delta_{\mathbf{x}_0^i}, \hspace{10pt}  \mu_1=\frac{1}{N}\sum_{i=1}^N \delta_{\mathbf{x}_1^i},
\end{equation}
to the initial and target distributions of the ensemble states. Then, assuming exact controllability (\thref{def_exact}), let us define 
\begin{equation} 
    M(\mathbf{x}_0,\mathbf{x}_1):=\max_{i=1,2,...,N} |t_i^0+t_i^1|
\end{equation}
as the controllability measure for the ensemble between $\mathbf{x}_0$ and $\mathbf{x}_1$, in which $t_i^0=t^0(\mathbf{x}_i^0)$, and $t_i^1=t^1(\mathbf{x}_i^1)$, where
\begin{subequations}\label{times}
\begin{align} 
    t^0(x):=&\inf\{t \in \mathbb{R}_{+}: \phi^V_t(x) \in D\},\\
    t^1(x):=&\inf\{t \in \mathbb{R}_{+}: \phi^V_{-t}(x) \in D\},
\end{align}
\end{subequations}

and, $t \longmapsto \phi^V_t(x)$ and $t \longmapsto \phi^V_{-t}(x)$ are the forward and backward flows passing through $x$ and induced by the uncontrolled vector field $V$. But the struggle, here, is that, in order to compute the measure $M(\mathbf{x}_0,\mathbf{x}_1)$ we need to evaluate the minimum times $t_i^0=t^0(\mathbf{x}_i^0)$, and $t_i^1=t^1(\mathbf{x}_i^1)$, $i=1,2,...,N$. And to evaluate $t_i^0$, for instance, we need to measure how long it takes for the $i$-th agent with initial state $\mathbf{x}_i^0 \neq 0$ to get to the control region $D$ under the uncontrolled field $V$. However, this necessitates setting up the ensemble at the initial configuration $\{0,0,...,0,\mathbf{x}_i^0,0,...,0\}$, and letting it run through $V$, which is not practical in many circumstances under the ensemble setting. So, as an example, the following proposed result (\cite{MinimalTime}), leads us to algorithms which are computationally feasible in practice.
\begin{theorem}
Given a finite ensemble with dynamics \eqref{eq:34}, and the initial and final states $\mathbf{x}_0$ and $\mathbf{x}_1$ charachterized by the measures $\eqref{empirical}$, the controllabaility measure $M(\mathbf{x}_0,\mathbf{x}_1)$ is equal to 
$$S(\mu_0,\mu_1):=\sup_{m \in [0,1]} \{\mathcal{F}_{\mu_0}^{-1}(m)+\mathcal{H}_{\mu_1}^{-1}(1-m)\},$$ where 
\begin{align}
    \mathcal{F}_{\mu_0}^{-1}(m)&=\inf \{t \in \mathbb{R}_{+}: \mathcal{F}_{\mu_0}(t)\geq m\},\\
    \mathcal{H}_{\mu_1}^{-1}(m)&=\inf \{t \in \mathbb{R}_{+}: \mathcal{H}_{\mu_1}(t)\geq m\},
\end{align}
in which 
\begin{align}
    \mathcal{F}_{\mu_0}(t)&=\mu_0(\{x \in supp(\mu_0): t^0(x) \leq t\},\\
    \mathcal{H}_{\mu_1}(t)&=\mu_1(\{x \in supp(\mu_1): t^1(x) \leq t\},
\end{align}
are the masses of the distributions $\mu_0$ and $\mu_1$, respectively, that reach, to the control region $D$ from $supp(\mu_0)$, and to $supp(\mu_1)$ from the region $D$, respectively, under the uncontrolled field $V$, in \underline{at most $t$ time units}.  
\end{theorem}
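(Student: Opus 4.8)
The plan is to collapse this measure-theoretic identity to a finite, combinatorial statement about the $2N$ transit times and then recognize both sides as two descriptions of the same bottleneck-assignment value. First I would set $a_i := t^0(\mathbf{x}_0^i)$ and $b_i := t^1(\mathbf{x}_1^i)$ for $i=1,\dots,N$. The assumption of exact controllability (\thref{def_exact}) forces every agent both to reach $D$ and to be reachable from $D$ under the uncontrolled flow $\phi^V$, so all $a_i,b_i$ are finite and nonnegative, whence $|t_i^0+t_i^1|=a_i+b_i$. Since $S(\mu_0,\mu_1)$ depends only on $\mu_0,\mu_1$ and therefore only on the multisets $\{a_i\}$ and $\{b_i\}$ (the measures carry no pairing between initial and target points), the asserted equality $M=S$ can hold only if $M$ is likewise assignment-independent; this identifies the controllability measure with the least transfer time over all matchings, $M(\mathbf{x}_0,\mathbf{x}_1)=\min_{\sigma}\max_{i}\,(a_i+b_{\sigma(i)})$ over permutations $\sigma$ of $\{1,\dots,N\}$. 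This is the physically meaningful reading: an agent steered from $\mathbf{x}_0^i$ to $\mathbf{x}_1^{\sigma(i)}$ drifts for $a_i$ units into $D$ and $b_{\sigma(i)}$ units out of $D$ to its target, and the whole indistinguishable ensemble is transferred only when the slowest agent arrives.

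Next I would evaluate $S(\mu_0,\mu_1)$ directly for the empirical measures. The maps $\mathcal{F}_{\mu_0}$ and $\mathcal{H}_{\mu_1}$ are exactly the empirical cumulative distribution functions of the samples $\{a_i\}$ and $\{b_i\}$, so their generalized inverses are the increasing order statistics: writing $a_{(1)}\le\cdots\le a_{(N)}$ and $b_{(1)}\le\cdots\le b_{(N)}$, one has $\mathcal{F}_{\mu_0}^{-1}(m)=a_{(\lceil mN\rceil)}$ and $\mathcal{H}_{\mu_1}^{-1}(m)=b_{(\lceil mN\rceil)}$ for $m\in(0,1]$, with value $0$ at $m=0$. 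For $m$ in the open interval $(\tfrac{k-1}{N},\tfrac{k}{N})$ we get $\mathcal{F}_{\mu_0}^{-1}(m)=a_{(k)}$ and $1-m\in(\tfrac{N-k}{N},\tfrac{N-k+1}{N})$, hence $\mathcal{H}_{\mu_1}^{-1}(1-m)=b_{(N+1-k)}$, so the summand is constant and equal to $a_{(k)}+b_{(N+1-k)}$ on that piece. Checking that the finitely many breakpoints $m=k/N$ (and $m=0$) produce only values dominated by an adjacent piece, using $a_{(k+1)}\ge a_{(k)}$ and $b_{(k+1)}\ge b_{(k)}$, the supremum collapses to the maximum over the pieces, giving $S(\mu_0,\mu_1)=\max_{1\le k\le N}\big(a_{(k)}+b_{(N+1-k)}\big)$, the value obtained by pairing the $a$'s in increasing order against the $b$'s in decreasing order.

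The remaining and central step is the bottleneck rearrangement identity $\max_{k}\,(a_{(k)}+b_{(N+1-k)})=\min_{\sigma}\max_{i}\,(a_i+b_{\sigma(i)})$. The inequality ``$\ge$'' is immediate, since the anti-sorted pairing is one admissible $\sigma$. For ``$\le$'' I would run an exchange argument: starting from an arbitrary permutation, whenever two indices satisfy $a_{(p)}\le a_{(q)}$ while their partnered values are in the same order, $b\le b'$ (larger $a$ matched to larger $b$), swapping the partners cannot increase the maximum of the two affected sums, because $\max\{a_{(p)}+b',\,a_{(q)}+b\}\le\max\{a_{(p)}+b,\,a_{(q)}+b'\}=a_{(q)}+b'$. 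Finitely many such swaps drive any permutation to the anti-sorted one without ever raising the objective, so the anti-sorted value is $\le$ the value of every $\sigma$, hence $\le\min_\sigma$. Combining the three displays yields $M(\mathbf{x}_0,\mathbf{x}_1)=S(\mu_0,\mu_1)$.

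I expect the principal obstacle to be conceptual rather than computational: pinning down the reading of $M$ so that it is genuinely a function of $\mu_0$ and $\mu_1$ alone, and justifying that the minimal ensemble transfer time decomposes additively as $a_i+b_{\sigma(i)}$. The latter requires arguing that the time an agent spends manoeuvring inside the control region $D$ does not enter the bottleneck, and that the drift-in and drift-out legs of all agents can be scheduled consistently and simultaneously under the single broadcast control. Once that modelling point is secured, the order-statistic bookkeeping of the second step and the exchange argument of the third are routine, though some care is needed with the generalized inverses at the breakpoints $m=k/N$ and with degenerate cases of repeated transit times or points of $\mu_0,\mu_1$ with multiplicity.
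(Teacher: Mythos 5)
Your order-statistics evaluation of $S(\mu_0,\mu_1)$ for the empirical measures \eqref{empirical} and your exchange-argument proof of the bottleneck identity $\max_{1\le k\le N}\bigl(a_{(k)}+b_{(N+1-k)}\bigr)=\min_{\sigma}\max_{i}\bigl(a_i+b_{\sigma(i)}\bigr)$ are both correct, including the careful check that the breakpoints $m=k/N$ and the endpoints $m=0,1$ are dominated by adjacent pieces. For calibration, you should know the paper offers no proof at all: the theorem is imported from \cite{MinimalTime}, with the ``mild geometric and analytic conditions'' it needs simply assumed in passing. So your combinatorial layer is more than the paper supplies; the difficulty is that it does not prove the statement actually made.

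There are two genuine gaps. First, the reading of $M$. The paper defines $M(\mathbf{x}_0,\mathbf{x}_1)=\max_i|t_i^0+t_i^1|$ with the indexation of \eqref{empirical} fixed, i.e., agent $i$ is implicitly paired with target $i$. As you yourself observe, this quantity is not a function of $(\mu_0,\mu_1)$ alone: with $N=2$, $a=(1,2)$, $b=(1,2)$, the fixed pairing gives $\max_i(a_i+b_i)=4$ while $S(\mu_0,\mu_1)=3$, so the theorem is false under the literal reading. Your response --- declaring that $M$ ``must'' be the assignment-independent bottleneck value $\min_\sigma\max_i(a_i+b_{\sigma(i)})$ --- is circular: you use the truth of the claimed equality to decide what its left-hand side means. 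What your argument actually establishes is a correction to the statement (either $M$ is taken over the optimal matching, or the fixed pairing must be the anti-sorted one), and this should be presented as such, with the counterexample, not folded in as an interpretation. Second, and more substantively, the control-theoretic core is assumed rather than proven: that the minimal transfer time of \eqref{eq:34} decomposes as drift-in plus drift-out time $a_i+b_{\sigma(i)}$, that the sojourn inside $D$ contributes nothing to the bottleneck (this relies on the control in \eqref{eq:34} being unbounded so transit through $D$ can be made arbitrarily fast), and that the $N$ in-and-out legs can be realized simultaneously by a single feedback law $u(t,x)$ under the geometric hypotheses on $D$ and $V$. You flag this yourself (``once that modelling point is secured''), but it is exactly the content of the cited theorem of \cite{MinimalTime}; with it granted, your proof reduces to verifying that $S$ equals the bottleneck value --- true, and cleanly done, but only the bookkeeping half of the theorem.
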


Note that the result above requires some mild geometric and analytic conditions on the distributions $\mu_0$, $\mu_1$, the vector field $V$ and the control input $u$. We take those conditions satisfied for simplicity now. Also, following the result above, in order to see an example of the structure of algorithms for computation of $S(\mu_0,\mu_1)$ , for the space dimension $n=2$ , we refer to \cite[Section 5]{MinimalTime}.\\
\begin{remark}[Invariance of the Controllability Measure]
Let us suppose the states $\mathbf{x}_0$, $\mathbf{x}_1$, of the finite ensemble 
above are linearly transformed by some matrix $P$. In this case, although the distributional characterization of the states will change to 
\begin{equation} \label{empirical2}
    \Tilde{\mu}_0=\frac{1}{N}\sum_{i=1}^N \delta_{P\mathbf{x}_0^i}, \hspace{10pt}  \Tilde{\mu}_1=\frac{1}{N}\sum_{i=1}^N \delta_{P\mathbf{x}_1^i},
\end{equation}
the control region $D$ and the velocity vector field $V$ will transform to $\Tilde{D}=P(D)$, and $\Tilde{V}(x)=Pf(P^{-1}x,0)$, correspondingly. Hence the minimum times $\eqref{times}$ will obviously remain unchanged, which implies \begin{equation}
    M_{\Tilde{D}}(P\mathbf{x}_0,P\mathbf{x}_1)=M_D(\mathbf{x}_0,\mathbf{x}_0).
\end{equation}.
\end{remark}

\renewcommand{\thechapter}{4}

\chapter{Conclusions and Recommendations}

In the current paper, we studied controllability and observability as two prominent systems theoretic concepts were studied in the context of ensembles. In the observability part, the problem of state tracking for ensembles with linear state and output dynamics was focused on in detail with a viewpoint from optimal transport theory. In this regard, the problem was first transformed into a linear quadratic problem with stochastic endpoints, analytic solution in terms of optimal transport map was identified, and a numerical algorithm to achieve it was proposed. In the controllability part, besides addressing different definitions of ensemble controllability, a measure of controllability for nonlinear ensembles was proposed with the  desired property of invariance under linear state transformations.

For the future studies, most importantly, generalization of the state tracking problem for nonlinear and discrete-time ensembles is recommended. Furthermore, in this paper, the admissible control inputs were restricted to deterministic feedback laws. So, the next natural step could be investigating the class of stochastic feedback laws with the insights from optimal transport theory.

\titleformat{\chapter}
{\normalfont\large}{Appendix \thechapter:}{1em}{}

\renewcommand{\baselinestretch}{1}
\small\normalsize

\addcontentsline{toc}{chapter}{Bibliography}

\bibliographystyle{unsrt}



\end{document}